\documentclass[11pt]{article}
\usepackage[margin=0.95in]{geometry}
\usepackage{amsmath,amssymb,amsthm,mathtools,bm,booktabs,array}
\usepackage[T1]{fontenc}
\usepackage[utf8]{inputenc}
\usepackage{microtype}
\usepackage{mathrsfs}
\usepackage[hidelinks]{hyperref}
\newcommand{\ii}{\mathrm i}
\newcommand{\x}{\bm x}
\newcommand{\p}{\bm p}
\newcommand{\Lop}{\bm L}
\newcommand{\J}{\bm J}
\newcommand{\Sig}{\bm\Sigma}
\newcommand{\Del}{\bm\Delta}
\newcommand{\Om}{\bm\Omega}
\newcommand{\sone}{\bm\sigma_1}
\newcommand{\stwo}{\bm\sigma_2}
\newcommand{\id}{\mathbb I}
\newtheorem{theorem}{Theorem}[section]
\newtheorem{proposition}[theorem]{Proposition}

\theoremstyle{remark}
\title{Nested polynomial algebras and the exact spectrum\\of a three-dimensional two-spin oscillator}
\author{%
Ian Marquette\textsuperscript{a}\thanks{%
E-mail: \texttt{i.marquette@latrobe.edu.au}},
O. O\u{g}ulcan Tuncer\textsuperscript{b}\thanks{%
E-mail: \texttt{otuncer@hacettepe.edu.tr}}
 and \.{I}smet Yurdu\c{s}en\textsuperscript{b}\thanks{%
E-mail: \texttt{yurdusen@hacettepe.edu.tr}}\\[0.8em]
\small \textsuperscript{a}Department of Mathematical and Physical Sciences,
La Trobe University,\\
\small Bendigo, VIC 3552, Australia\\[0.4em]
\small \textsuperscript{b}Department of Mathematics,
Hacettepe University,\\
\small 06800 Beytepe, Ankara, Turkey}
\date{}
\hypersetup{pdfauthor={Ian Marquette, O. Ogulcan Tuncer and Ismet Yurdusen},pdftitle={Nested polynomial algebras and the exact spectrum of a three-dimensional two-spin oscillator}}
\begin{document}
\maketitle
\begin{abstract}
We construct representations of the symmetry algebra and derive the exact spectrum of a three-dimensional oscillator with two spin-$\tfrac12$ degrees of freedom. The Hamiltonian and its first-order scalar and pseudo-scalar integrals were obtained in earlier classifications. Here we determine their action on the complete spectral basis. At fixed total angular momentum, a quadratic algebra acts on the three triplet orbital channels and admits a three-dimensional deformed-oscillator representation, while an overlapping two-state exchange algebra joins the middle triplet to the singlet. These finite modules determine angular--spin multiplicities but do not quantize the radial energy. A separate positive-discrete-series $\mathfrak{su}(1,1)$ representation of the common radial Hamiltonian generates the energy sequence. We give the finite matrices, Casimir and structure functions, normalized eigenfunctions, radial ladder actions, the exceptional $j=0$ sector, complete energy-shell multiplicities, and the effect of particle interchange.
\end{abstract}
\smallskip
\textbf{Keywords:} quantum superintegrability; systems with spin; polynomial algebras; deformed oscillators; spectrum-generating algebras.

\section{Introduction}

Algebraic methods can be used to determine energy spectra and describe degeneracies in quantum mechanics. Energy-preserving operators act within degenerate eigenspaces, whereas spectrum-generating operators connect different eigenvalues of the Hamiltonian. The isotropic oscillator is a standard example of the relation between degeneracy and a symmetry larger than the rotation group \cite{jauch,fradkin}. For superintegrable systems, the corresponding integrals need not
generate a Lie algebra: their commutators can close polynomially, with the Hamiltonian and other commuting observables entering the coefficients. These algebraic methods, together with classification, separation of variables and special-function methods, are reviewed in \cite{review}.

Deformed-oscillator realizations can be used to construct representations of polynomial algebras. Generalized oscillator algebras encode ladder amplitudes in a structure function
\cite{dask91}. This construction has been used to determine spectra of two-dimensional quantum superintegrable systems from finite-dimensional representation conditions \cite{bonatsos}.
A systematic realization and Casimir construction for quadratic algebras are given in \cite{dask}. Deformed-oscillator realizations have also been developed for broader classes of
three-generator polynomial algebras \cite{isaac}. In this approach, a nonlinear commutator problem is converted into equations for functions of a number operator. Zeros and positivity of the structure function then select admissible unitary representations. When the energy occurs among the central parameters, these conditions can determine it.

In higher dimensions, the full symmetry algebra can contain several overlapping quadratic subalgebras. Such subalgebras and their Casimirs have been used in the three-dimensional Kepler--Coulomb problem \cite{tanoudis}. Chains of quadratic algebras and deformed oscillators have also been constructed for an $n$-dimensional system \cite{chain}. Centrality is relative to a selected subalgebra: an integral may be a parameter in one representation calculation
and a noncentral generator in the next.

For matrix Hamiltonians, finite spin identities supplement the differential-operator commutators, and several orbital channels can carry the same total angular momentum. In work dating from the
mid-1980s, dynamical and hidden superalgebras were used to study Pauli systems in monopole and dyon backgrounds. These constructions provided spectra and wave functions through representation theory and related accidental degeneracy to conserved operators
\cite{dhokervinet84,dhokervinet85cmp,dhokervinet85prl,
dhokervinet85npb,dhokervinet86}. These systems involve a different physical interaction and,
in their basic form, one Pauli spin rather than the two-spin singlet--triplet multiplicity considered here.

The study of three-dimensional superintegrable systems with one spin-$\tfrac12$ degree of freedom also includes systematic first- and second-order integral classifications \cite{spin2009,spin2012}. In the first-order setting, superintegrability was shown explicitly to lead to exact bound-state energies and wave functions for a nontrivial example \cite{spin2009}. More recently, the scalar and pseudo-scalar classifications for two spin-$\tfrac12$ particles have supplied larger families of spin-dependent Hamiltonians and their first-order integrals \cite{scalar,pseudo}. The subsequent vector classification also gives a representative scalar reduction with Coulomb- and oscillator-type solutions \cite{vector}. These results provide the starting point for asking how finite spin--orbital modules, their overlap and radial ladders combine in one spectral problem.

We study the representations and exact spectrum of an oscillator Hamiltonian in Case~1 of Ref.~\cite{pseudo}. The spin-dependent Hamiltonian family and its first-order scalar and pseudo-scalar integrals are inherited from the classification; the source paper also discusses their polynomial commutators. We use those operators to construct an explicit sequence of finite angular representations, determine their overlap, and combine them with a radial spectrum-generating representation on a specified Hilbert-space domain.

The angular calculation has two stages. A quadratic algebra acts on the three triplet orbital channels at fixed total angular momentum. An exchange algebra then joins the middle triplet to the singlet. The two modules share the middle triplet and span the four-channel space. The finite matrices, Casimir value and deformed-oscillator structure functions are derived explicitly. They determine angular--spin multiplicity and the exchange pairing, but contain no radial energy variable.

The radial spectrum is obtained separately. The oscillator factorization method \cite{infeld,arik} and the radial $\mathfrak{su}(1,1)$ construction \cite{radialsu11} are established tools. Here the centrifugal parameter is the total angular momentum of the coupled spinor problem, and the same radial representation is repeated on the angular multiplicity space. Its lowest weight is selected by the boundary condition and determines the infinite energy sequence. The radial raising and lowering operators change the energy; they are not additional integrals of motion.

The classified family permits an arbitrary radial scalar term. Adding the displayed scalar inverse-square contribution to the oscillator potential cancels the angular shift which would otherwise make the $j=0$ radial problem strongly attractive. This gives a common regular radial realization in all angular channels. We determine the energy-shell multiplicities in each particle-interchange sector and identify the angular symmetries that preserve those sectors.

The paper is organized as follows. Section~\ref{sec:model} introduces
our notation and Hamiltonian and derives the common radial problem.
Section~\ref{sec:angular} recalls the source integrals and obtains the
exchange and quadratic algebras. In Section~\ref{sec:finite} we construct
their finite-dimensional representations and determine how they act
together on the angular channels. Section~\ref{sec:radial} gives the
radial spectrum, normalized eigenfunctions and the common operator
domain. Section~\ref{sec:multiplicities} derives the energy-shell
multiplicities and determines the effect of particle interchange.
Finally, Section~\ref{sec:conclusion} summarizes the results.

\section{The Hamiltonian and angular decomposition}\label{sec:model}
\subsection{Hamiltonian and notation}
We consider the relative motion of two nonrelativistic spin-$\tfrac12$
particles. The reduced mass is set to one. The relative position is
$\x=(x_1,x_2,x_3)$, and the relative momentum and orbital angular momentum
operators are denoted by $\p$ and $\Lop$, respectively:
\begin{equation}
 \p=-\ii\hbar\nabla,\qquad \Lop=\x\times\p,\qquad r=|\x|,
 \qquad \hbar>0,\quad \omega>0.
\end{equation}
We write $\bm\sigma=(\sigma_x,\sigma_y,\sigma_z)$ for the Pauli vector, where
\[
 \sigma_x=\begin{pmatrix}0&1\\1&0\end{pmatrix},\qquad
 \sigma_y=\begin{pmatrix}0&-\ii\\\ii&0\end{pmatrix},\qquad
 \sigma_z=\begin{pmatrix}1&0\\0&-1\end{pmatrix}.
\]
The Pauli vectors for the two particles are
\[
 \sone=\bm\sigma\otimes\id_2,\qquad
 \stwo=\id_2\otimes\bm\sigma,
\]
where $\id_2$ is the $2\times2$ identity matrix. Their components are
$4\times4$ matrices acting on different spin factors and satisfy
$[\sigma_{1i},\sigma_{2j}]=0$. The corresponding spin operators are
$\hbar\sone/2$ and $\hbar\stwo/2$.
We use $\id$ for the identity operator and omit it from scalar terms when
its meaning is clear. The notations $(\bm a,\bm b)$ and
$\bm a\cdot\bm b$ both denote $\sum_{i=1}^{3}a_i b_i$, with the order of
operator factors preserved. We define
\begin{equation}\label{eq:spinnotation}
 K=\sone\cdot\stwo,\qquad
 \Sig=\sone+\stwo,\qquad \Del=\sone-\stwo,
 \qquad \Om=\sone\times\stwo.
\end{equation}
We consider the Hamiltonian
\begin{equation}\label{eq:Hmodel}
 H_\omega=\frac{\p^2}{2}+\frac{\omega^2r^2}{2}
 +\frac{\hbar}{2r^2}\Sig\cdot\Lop
 +\frac{\hbar^2}{4r^2}(K+3\id).
\end{equation}
In the notation of the source classification this means
\begin{equation}\label{eq:sourcepotentials}
 V_0=\frac{\omega^2r^2}{2}+\frac{3\hbar^2}{4r^2},\quad
 V_1=\frac{\hbar}{r^2},\quad V_2=\frac{\hbar^2}{4r^2},
 \quad V_3=V_4=V_5=0.
\end{equation}
This is an oscillator specialization of Case~1, in which the scalar potential is arbitrary. The scalar term $3\hbar^2/(4r^2)$ is part of the chosen potential $V_0$. For comparison, the representative radial reduction in Section~5 of Ref.~\cite{vector} has $V_2=0$. The spin--spin term and scalar counterterm in Eq.~\eqref{eq:sourcepotentials} select a different specialization; their common singlet--triplet radial reduction is derived below.

The Hilbert space is the full four-component relative-coordinate space
\begin{equation}
 \mathcal H=L^2(\mathbb R^3,d^3x)\otimes\mathbb C^2\otimes\mathbb C^2.
\end{equation}
No particle-interchange projection is imposed at this stage. Spin exchange acts on the two spin factors, whereas complete particle interchange also exchanges the particle positions. For identical spin-$\tfrac12$ fermions the physical relative-coordinate sector is antisymmetric under the complete interchange; the symmetric sector is retained below as a useful comparison. The consequences of such a projection are discussed in Section~\ref{sec:statistics}.

\subsection{Angular channels and their common radial Hamiltonian}\label{sec:channels}
Rotational invariance allows us to decompose the spinor space at fixed
$j,m$. The total angular momentum is
\begin{equation}
 \J=\Lop+\frac{\hbar}{2}\Sig,\qquad U=\Sig\cdot\Lop.
\end{equation}
The two spins combine to give total spin $s=0$ or $s=1$. Coupling either value to the orbital angular momentum gives $j=0,1,2,\ldots$. For $j\geq1$, the four multiplicity channels are
\begin{equation}\label{eq:channels}
 |s;jm\rangle=|(j,0)jm\rangle,\quad
 |t_-;jm\rangle=|(j-1,1)jm\rangle,\quad
 |t_0;jm\rangle=|(j,1)jm\rangle,\quad
 |t_+;jm\rangle=|(j+1,1)jm\rangle.
\end{equation}
The label $s$ on the first ket denotes the singlet. At $j=0$, only the singlet with $l=0$ and the triplet with $l=1$ exist. In particular there is no middle triplet at $j=0$.

The rotational eigenvalues are
\begin{equation}
 \J^2=\hbar^2j(j+1),\qquad J_3=\hbar m,
 \qquad j=0,1,\ldots,\quad m=-j,\ldots,j.
\end{equation}

The eigenvalues of $K$ are $-3$ in the singlet and $1$ in the triplet.
To obtain those of $U$, we use
\begin{equation}\label{eq:Jidentity}
 \J^2=\Lop^2+\hbar U+\frac{\hbar^2}{4}(6+2K).
\end{equation}
For a channel with orbital label $l$, substituting the eigenvalues
$\hbar^2j(j+1)$ and $\hbar^2l(l+1)$ of $\J^2$ and $\Lop^2$ gives
the following table:
\begin{center}
\begin{tabular}{llll}
\toprule
Channel & $l$ & $K$ & $U$ \\
\midrule
$s$ & $j$ & $-3$ & $0$ \\
$t_-$ & $j-1$ & $1$ & $2\hbar(j-1)$ \\
$t_0$ & $j$ & $1$ & $-2\hbar$ \\
$t_+$ & $j+1$ & $1$ & $-2\hbar(j+2)$ \\
\bottomrule
\end{tabular}
\end{center}
These eigenvalues will be used in both the angular representation
and the radial reduction.

Using $\p^2=-\hbar^2(\partial_r^2+2\partial_r/r)+\Lop^2/r^2$ in Eq.~\eqref{eq:Hmodel}, the angular numerator is
\begin{equation}
 \Lop^2+\hbar U+\frac{\hbar^2}{2}(K+3)=\J^2.
\end{equation}
Consequently
\begin{equation}\label{eq:Hradialfull}
 H_\omega=-\frac{\hbar^2}{2}\left(\partial_r^2+\frac2r\partial_r\right)
 +\frac{\J^2}{2r^2}+\frac{\omega^2r^2}{2}.
\end{equation}
For $u=rR$, the radial measure is $dr$, and the fixed-$j$ radial Hamiltonian becomes
\begin{equation}\label{eq:Hj}
 H_j=-\frac{\hbar^2}{2}\frac{d^2}{dr^2}
 +\frac{\hbar^2j(j+1)}{2r^2}+\frac{\omega^2r^2}{2}.
\end{equation}
The full Hamiltonian therefore acts as $H_j$ on the radial factor and as the identity on the multiplicity space in Eq.~\eqref{eq:channels}. This reduction holds at the operator level.

For $j=0$ we choose $u(0)=0$, the Friedrichs condition for the half-line oscillator. For $j\geq1$, the regular solution behaves as $r^{j+1}$ and the singular power $r^{-j}$ is not square-integrable at the origin. The oscillator potential is confining at infinity. These choices give the Friedrichs realization in every channel. The same radial condition is used for all channels at a fixed $j$.

Without the scalar counterterm, the common angular reduction of the unmodified Case-1 oscillator would replace $j(j+1)$ in Eq.~\eqref{eq:Hj} by $j(j+1)-3/2$. At $j=0$ this lies below the inverse-square semiboundedness threshold. The present Hamiltonian instead has coefficient zero there and the ordinary Dirichlet condition.

This common radial operator allows the angular integrals to mix channels without changing the radial boundary condition.

Let $\mathcal V_j$ denote the $(2j+1)$-dimensional irreducible
representation space of rotations with total angular momentum $j$,
with basis $|j,m\rangle$, $m=-j,\ldots,j$. Each spin--orbital channel
at fixed $j$ carries one copy of this representation. The full
angular decomposition is therefore
\begin{equation}\label{eq:methoddecomposition}
 \mathcal H\simeq\bigoplus_{j=0}^{\infty}
 \left[L^2(0,\infty;dr)\otimes\mathbb C^{\mu_j}
 \otimes\mathcal V_j\right],
 \qquad \mu_0=2,\quad\mu_j=4\ (j\geq1).
\end{equation}
Here $\mathbb C^{\mu_j}$ labels the distinct spin--orbital channels,
while $\mathcal V_j$ carries the magnetic states within each channel.
At fixed $j,m$, this reduces to
\begin{equation}\label{eq:architecturefactor}
 \mathcal H_{j,m}=L^2(0,\infty;dr)\otimes\mathcal M_j,
 \qquad
 \mathcal M_j=
 \begin{cases}
  \operatorname{span}\{s,t_-,t_0,t_+\},&j\geq1,\\
  \operatorname{span}\{s,t_+\},&j=0.
 \end{cases}
\end{equation}
The Hamiltonian acts as $H_j$ on the radial factor and as the identity on the finite factors. Angular scalar and pseudo-scalar integrals act on $\mathcal M_j$, while rotations act on $\mathcal V_j$.

\section{Angular integrals and polynomial algebras}\label{sec:angular}
The inherited scalar and pseudo-scalar integrals act within the angular
multiplicity spaces. Their commutators contain the exchange and quadratic
subalgebras needed to describe this action.

\subsection{Source integrals and working generators}\label{sec:complete-integrals}
We use the seven Case~1 pseudo-scalar integrals $Y_P^1,\ldots,Y_P^7$ from Ref.~\cite{pseudo}. This first-order classification includes both first-order and zeroth-order operators, as indicated in the table below. The scalar classification of Ref.~\cite{scalar} supplies the two first-order angular integrals $X$ and $Y$. Their commutator introduces the further conserved angular operator $Z$ used in the enlarged Case~1 algebra of Ref.~\cite{pseudo}:
\begin{equation}\label{eq:XYZsource}
 X=(\stwo,\Lop),\qquad Y=(\sone,\Lop),\qquad
 Z=(\sone\times\stwo,\Lop),\qquad [Y,X]=\ii\hbar Z.
\end{equation}
Together with $K=(\sone,\stwo)$ and the rotational integrals $\J^2,J_3$, all of these commute with the Hamiltonian:
\begin{equation}\label{eq:allcommuteH}
 [H_\omega,Y_P^a]=0\quad(a=1,\ldots,7),\qquad
 [H_\omega,X]=[H_\omega,Y]=[H_\omega,Z]=[H_\omega,K]=0,
\end{equation}
with $[H_\omega,\J^2]=[H_\omega,J_3]=0$.  The complete explicit differential formulas for $Y_P^1,\ldots,Y_P^7$ are collected in Appendix~\ref{app:source-integrals}; the four longer operators $Y_P^1,\ldots,Y_P^4$ are also displayed there as polynomial expressions in a smaller generating set.

For the representation calculations we introduce the combinations
\begin{equation}\label{eq:workinggenerators-early}
 U=X+Y,\qquad V=X-Y,\qquad Z_L=Z,\qquad
 B=2Y_P^6+Y_P^7,\qquad Q=Y_P^7,\qquad R=-Y_P^5,
\end{equation}
with $T=Y_P^2-Y_P^3$. The source labels and their working forms are
listed below:

\begin{center}
\begin{tabular}{@{}>{\raggedright\arraybackslash}p{0.14\linewidth}>{\raggedright\arraybackslash}p{0.34\linewidth}>{\raggedright\arraybackslash}p{0.10\linewidth}>{\raggedright\arraybackslash}p{0.30\linewidth}@{}}
\toprule
Source label & Explicit/working form & Order & Role below\\
\midrule
$Y_P^1$ & see Appendix~\ref{app:source-integrals} & first & polynomially dependent\\
$Y_P^2$ & see Appendix~\ref{app:source-integrals} & first & enters $T=Y_P^2-Y_P^3$\\
$Y_P^3$ & see Appendix~\ref{app:source-integrals} & first & enters $T=Y_P^2-Y_P^3$\\
$Y_P^4$ & see Appendix~\ref{app:source-integrals} & first & polynomially dependent\\
$Y_P^5$ & $(\Om,\x)/r$ & zero & $R=-Y_P^5$\\
$Y_P^6$ & $(\sone,\x)/r$ & zero & enters $B$ and $Q$\\
$Y_P^7$ & $(\stwo-\sone,\x)/r$ & zero & $Q=Y_P^7$\\
$X,Y$ & $(\stwo,\Lop),(\sone,\Lop)$ & first & $U=X+Y$, $V=X-Y$\\
$Z$ & $(\Om,\Lop)$ & first & $Z_L=Z$\\
\bottomrule
\end{tabular}
\end{center}
Under the spin-exchange involution $P=(K+\id)/2$, $B=2Y_P^6+Y_P^7=(\Sig,\x)/r$ is exchange-even, while $Q=Y_P^7=-(\Del,\x)/r$, $R=-Y_P^5$, $V=X-Y$ and $Z_L=Z$ form exchange-odd pairs.

The smaller source generating set
\begin{equation}\label{eq:sourceG}
 \mathcal G=\{K,Y_P^5,Y_P^6,Y_P^7,X,Y,Z\}
\end{equation}
generates an associative algebra with polynomial commutator closure.  The first relations already show why $K$ is not central in the full angular algebra:
\begin{align}\label{eq:sourcecomm1}
 &[K,Y_P^6]=2\ii Y_P^5,\quad [K,Y_P^5]=4\ii Y_P^7, \quad [K,Y_P^7]=-4\ii Y_P^5, \nonumber\\
 &[Y_P^6,Y_P^7]=0,\quad
 [Y_P^6,Y_P^5]=2\ii\left(K-\id+\frac{(Y_P^5)^2}{2}\right), \quad
 [Y_P^7,Y_P^5]=-4\ii\left(K-\id+\frac{(Y_P^5)^2}{2}\right).
\end{align}
The scalar angular generators satisfy
\begin{align}\label{eq:sourcecomm2}
 [K,Y]=2\ii Z,\quad [K,X]=-2\ii Z,\quad [K,Z]=4\ii(X-Y),\quad [Y,X]=\ii\hbar Z,
\end{align}
and their mixed commutators with the zero-order pseudo-scalars produce $Y_P^1,\ldots,Y_P^4$.  For example,
\begin{align}
 [Y,Y_P^6]=2\ii Y_P^3, \quad [X,Y_P^6]=-\ii\hbar Y_P^5,\quad
 [Y,Y_P^7]=\ii\hbar Y_P^5-2\ii Y_P^3,\quad
 [X,Y_P^7]=-2\ii Y_P^2+\ii\hbar Y_P^5.\label{eq:sourcemixedlast}
\end{align}
The polynomial reductions of $Y_P^1,\ldots,Y_P^4$ are given in Appendix~\ref{app:source-integrals}, and the remaining source commutators are collected in Appendix~\ref{app:source-commutators}.

\subsection{Exchange identities and the angular generators}\label{sec:exchange-algebras}
For the spin-exchange operator $P$, the Pauli relations give
\begin{equation}
 K^2+2K-3\id=0,\qquad P=\frac{K+\id}{2},\qquad P^2=\id.
\end{equation}
The triplet and singlet projectors are
\begin{equation}
 \Pi_t=\frac{K+3\id}{4},\qquad \Pi_s=\frac{\id-K}{4}.
\end{equation}
The exchange involution satisfies
\begin{equation}
 P\Del=\ii\Om,\qquad P\Om=-\ii\Del,\qquad [P,\Sig]=0.
\end{equation}
If a Hermitian integral $Q$ anticommutes with $P$, its Hermitian exchange partner is $R=-\ii P Q$. With $\mathcal Z=Q^2$, direct multiplication gives
\begin{align}
 [P,Q]=2\ii R,\quad [P,R]=-2\ii Q, \quad [Q,R]=2\ii\mathcal ZP,\nonumber \\
 Q^2=R^2=\mathcal Z, \quad [\mathcal Z,P]=[\mathcal Z,Q]=[\mathcal Z,R]=0.\label{eq:exchangesquare}
\end{align}
With $\bm n=\x/r$, the angular generators are
\begin{equation}\label{eq:generators}
 B=\Sig\cdot\bm n,\quad Q=-\Del\cdot\bm n,\quad R=-\ii P Q,
 \quad U=\Sig\cdot\Lop,\quad
 V=-\Del\cdot\Lop,\quad Z_L=\Om\cdot\Lop.
\end{equation}
All these operators are independent of the radial variable when expressed in angular coordinates. They are rotational scalars or pseudo-scalars and commute with $\J^2$. Equation~\eqref{eq:Hradialfull} therefore proves their formal conservation for $H_\omega$. Section~\ref{sec:domains} supplies the common spectral domain.

The position-type exchange pair satisfies
\begin{equation}\label{eq:positionids}
 BQ=BR=0,\qquad Q^2=R^2=4\id-B^2,
 \qquad B(B^2-4\id)=0.
\end{equation}
For $j\geq1$, the angular multiplicity space has a $B=0$ two-state exchange block and $B=\pm2$ one-dimensional blocks. At $j=0$, only the two-state $B=0$ block remains. These are kinematic Pauli identities. They do not depend on the oscillator frequency.

The orbital exchange pair instead satisfies
\begin{equation}\label{eq:Vexchange}
 Z_L=\ii P V,\qquad [P,V]=-2\ii Z_L,\qquad [P,Z_L]=2\ii V.
\end{equation}
The operator $Q$ connects the singlet to the outer triplet channels, whereas $V$ connects it to the middle triplet.

\subsection{The quadratic subalgebra}\label{sec:quadratic-algebra}
We next consider the exchange-even pair $U,B$. To identify its
commutator algebra and Casimir, we first recall the standard
three-generator quadratic form \cite{dask,isaac}:
\begin{align}\label{eq:methodquadratic}
 [A,B]&=C,\\
 [A,C]&=\alpha A^2+\gamma\{A,B\}+\delta A+\varepsilon B+\zeta,\\
 [B,C]&=aA^2-\gamma B^2-\alpha\{A,B\}+dA-\delta B+z.\label{eq:methodquadraticlast}
\end{align}
Here $\{A,B\}=AB+BA$, and the coefficients are central in the selected
algebra. On a sector with fixed central eigenvalues, they become
numerical parameters. For the present system, we set
\begin{equation}
 A=U,\qquad C=[A,B],\qquad \kappa=6+2K.
\end{equation}
Case~1 is a specialization of Case~2 of Ref.~\cite{pseudo}. Under that specialization, $B=(\Sig,\x/r)$ is the Case~2 generator $Y_P^8$ and $T=Y_P^2-Y_P^3$ is the Case~2 generator $Y_P^{10}$. The resulting subalgebra has the canonical quadratic form of Ref.~\cite{dask}:
\begin{align}\label{eq:quadratic}
 [A,B]&=C,\\
 [A,C]&=-2\hbar\{A,B\}+(4\J^2-\hbar^2\kappa)B,\label{eq:quadraticAC}\\
 [B,C]&=2\hbar B^2-4A-2\hbar\kappa.\label{eq:quadraticBC}
\end{align}
Comparison with Eqs.~\eqref{eq:methodquadratic}--\eqref{eq:methodquadraticlast}
gives $\alpha=\delta=\zeta=a=0$, $\gamma=-2\hbar$,
$\varepsilon=4\J^2-\hbar^2\kappa$, $d=-4$ and $z=-2\hbar\kappa$.
Both $K$ and $\J^2$ are central in this subalgebra.
The Casimir construction of Ref.~\cite{dask}, with these coefficients, gives
\begin{equation}\label{eq:CasQ}
 \mathcal K=C^2+2\hbar\{A,B^2\}
 +[\hbar^2(\kappa+4)-4\J^2]B^2-4A^2-4\hbar\kappa A.
\end{equation}
In the physical two-spin realization,
\begin{equation}\label{eq:CasQvalue}
 \mathcal K=4\kappa(2\hbar^2-\J^2).
\end{equation}
Here $U$ has differential order one and $B$ has order zero; quadratic refers to their polynomial commutator closure. The quadratic coefficients contain the angular invariant, but not $H_\omega$.

The generator $C$ satisfies $C=-2\ii T$, where the first-order pseudo-scalar $T$ is
\begin{equation}
 T=\frac1r(\Sig\cdot\x)(\x\cdot\p)-r\Sig\cdot\p
 -\frac{\ii\hbar}{r}\Sig\cdot\x.
\end{equation}
The apparent radial derivatives cancel in this combination. Thus the generators in Eqs.~\eqref{eq:quadratic}--\eqref{eq:quadraticBC} are angular on the reduced Hilbert space.

\section{Finite-dimensional representations}\label{sec:finite}
At fixed $j,m$, the angular algebras act on the finite multiplicity space
$\mathcal M_j$ in Eq.~\eqref{eq:architecturefactor}. We construct their
representations using the channel basis and a deformed-oscillator
realization.

\subsection{Deformed-oscillator realization}\label{sec:method}
A deformed oscillator encodes the ladder amplitudes in a structure
function $\Phi$. Its number operator $\mathsf N$ and mutually adjoint
pair $b,b^\dagger$ satisfy \cite{dask91,bonatsos,dask}
\begin{equation}\label{eq:methodoscillator}
 [\mathsf N,b^\dagger]=b^\dagger,\quad [\mathsf N,b]=-b,\quad
 b^\dagger b=\Phi(\mathsf N),\quad bb^\dagger=\Phi(\mathsf N+1).
\end{equation}
A standard realization is $A=A(\mathsf N)$ and $B=B_0(\mathsf N)+b^\dagger\rho(\mathsf N)+\rho(\mathsf N)b$, with real diagonal functions for a Hermitian representation. The commutator and Casimir identities determine these functions and the structure function $\Phi$. A finite unitary representation on $|\nu\rangle$, $\nu=0,\ldots,p$, requires
\begin{equation}\label{eq:methodendpoints}
 \Phi(0)=\Phi(p+1)=0,\qquad \Phi(\nu)>0\quad(1\leq\nu\leq p).
\end{equation}
The ladder amplitudes are $\sqrt{\Phi(\nu+1)}$ and $\sqrt{\Phi(\nu)}$; the upper zero means $b^\dagger|p\rangle=0$. These conditions determine angular multiplicities here, since the angular structure functions do not contain the energy. The number labels $N_a$ and $N_e$ used below are distinct from the radial number $n$.

The physical Pauli identities must also be satisfied. We choose the normalization $B=b+b^\dagger$ in the triplet calculation and compare its amplitudes with physical angular matrix elements. Other choices can redistribute diagonal factors between $b,b^\dagger$ and $\rho(\mathsf N)$ without changing the representation.

\subsection{Triplet representation of the quadratic subalgebra}
For $j\geq1$, the quadratic generators preserve the triplet sector.
In the ordered basis $t_-,t_0,t_+$, the generator $A$ is diagonal:
\begin{equation}\label{eq:Amatrix}
 A=\hbar\operatorname{diag}(2j-2,-2,-2j-4),
\end{equation}
while, in the common phase convention specified in Section~\ref{sec:complete-solution} and Appendix~\ref{app:physical-matrices},
\begin{equation}\label{eq:Bmatrix}
 B=\begin{pmatrix}
 0&2\sqrt{\dfrac{j+1}{2j+1}}&0\\
 2\sqrt{\dfrac{j+1}{2j+1}}&0&2\sqrt{\dfrac{j}{2j+1}}\\
 0&2\sqrt{\dfrac{j}{2j+1}}&0
 \end{pmatrix}.
\end{equation}
Together with $C=AB-BA$, these matrices satisfy all relations in Eqs.~\eqref{eq:quadratic}--\eqref{eq:quadraticBC}. They give
\begin{equation}
 \mathcal K=32\hbar^2[2-j(j+1)]\id_3.
\end{equation}
The singlet is a separate one-dimensional representation of this quadratic subalgebra, on which $A=B=C=0$ and $K=-3$.

The off-diagonal entries of $B$ connect neighbouring triplet channels.
We label these channels by $N_a=0,1,2$ and use the realization
\begin{equation}
 [N_a,b^\dagger]=b^\dagger,\quad [N_a,b]=-b,\quad
 b^\dagger b=\Phi_j(N_a),\quad bb^\dagger=\Phi_j(N_a+1),\qquad B=b+b^\dagger.
\end{equation}
The eigenvalue function of $A$ is
\begin{equation}\label{eq:AN}
 A(N_a)=\hbar\left[j(j+1)-\frac74-
 \left(N_a+j-\frac12\right)^2\right].
\end{equation}
At $N_a=0,1,2$ it reproduces Eq.~\eqref{eq:Amatrix}. Writing $x=N_a+j-\tfrac12$, the diagonal part of Eq.~\eqref{eq:quadraticBC} becomes
\begin{equation}\label{eq:recurrencePhi}
 (x+1)\Phi_j(N_a+1)-(x-1)\Phi_j(N_a)
 =-x^2+j(j+1)+\frac94.
\end{equation}
Solving with $\Phi_j(0)=0$ yields
\begin{equation}\label{eq:Phiangular}
 \Phi_j(N_a)=
 \frac{N_a(3-N_a)(2j+N_a-2)(2j+N_a+1)}
 {(2j+2N_a-3)(2j+2N_a-1)}.
\end{equation}
The values needed by the finite representation are
\begin{equation}\label{eq:Phiangularvalues}
 \Phi_j(0)=\Phi_j(3)=0,\qquad
 \Phi_j(1)=\frac{4(j+1)}{2j+1},\qquad
 \Phi_j(2)=\frac{4j}{2j+1}.
\end{equation}
Both interior values are positive for integer $j\geq1$, and their square roots are precisely the matrix entries in Eq.~\eqref{eq:Bmatrix}. The finite representation has $p=2$ and dimension three. The structure function is used only on this finite representation.

The rational form in Eq.~\eqref{eq:Phiangular} reflects the normalization $B=b+b^\dagger$. At the integer arguments used here, the denominators are nonzero and the finite amplitudes agree with the physical matrix elements.

The values $N_a=0,1,2$ label the three triplet orbital channels.

\subsection{Exchange representation, overlap and the full angular image}\label{sec:exchange-representation}
The quadratic algebra leaves the singlet separate. To include
singlet--triplet transitions, we use the exchange relations of
Section~\ref{sec:exchange-algebras}.

On a nonzero unitary block $\mathcal Z=z\id$, $z>0$,
the relations in Eq.~\eqref{eq:exchangesquare} become the Pauli
relations after dividing $Q,R$ by $\sqrt z$. The irreducible block is two-dimensional. The corresponding finite-oscillator structure function can be normalized as
\begin{equation}\label{eq:phiexchange}
 \Phi_{\mathrm{ex}}(N)=zN(2-N),\qquad N=0,1,\qquad \Phi_{\mathrm{ex}}(2)=0.
\end{equation}
At $z=0$, Hermiticity forces $Q=R=0$ on a unitary representation. This is a shortening of the exchange representation.

For the orbital exchange algebra, the combination
\begin{equation}\label{eq:Lambda}
 \Lambda=K+\frac2\hbar U
\end{equation}
commutes with $P,V,Z_L$. In fact
\begin{equation}
 [U,V]=2\ii\hbar Z_L,\qquad [U,Z_L]=-2\ii\hbar V,
\end{equation}
and the corresponding commutators with $K$ cancel in $\Lambda$. A useful polynomial identity is
\begin{equation}\label{eq:Vsquare}
 V^2=4\Lop^2-\frac{K+2}{3}(U^2+2\hbar U).
\end{equation}
The singlet and middle triplet both have $\Lambda=-3$, whereas the outer triplet values are
\begin{equation}
 \Lambda(t_-)=4j-3,\qquad \Lambda(t_+)=-4j-7.
\end{equation}
For $j\geq1$ the only nontrivial exchange action of $V$ is therefore on the singlet--middle-triplet pair. On that pair,
\begin{equation}\label{eq:Vactive}
 V^2=4\J^2=4\hbar^2j(j+1).
\end{equation}
The same construction applies to the orbital pair $(V,-Z_L)$.
With $z=4\hbar^2j(j+1)$, Eq.~\eqref{eq:phiexchange} gives the
$p=1$ structure function
\begin{equation}\label{eq:PhiV}
 \Phi_V(N_e)=4\hbar^2j(j+1)N_e(2-N_e),\qquad N_e=0,1,\quad\Phi_V(2)=0.
\end{equation}
The exchange number $N_e$ takes the values $0,1$, and $\Phi_V(2)=0$ is the upper termination condition.

At fixed $j,m$ with $j\geq1$, the triplet and orbital-exchange
representations act on
\[
\begin{aligned}
 \mathcal T_j
 &=\operatorname{span}\{t_-,t_0,t_+\}
 &&\text{(triplet quadratic algebra)},\\
 \mathcal E_j
 &=\operatorname{span}\{s,t_0\}
 &&\text{(orbital exchange algebra)}.
\end{aligned}
\]
The unital associative algebras generated by $U,B$ on $\mathcal T_j$
and by $P,V$ on $\mathcal E_j$ have matrix images
$M_3(\mathbb C)$ and $M_2(\mathbb C)$, respectively.
The shared channel is $t_0$, with $U=-2\hbar$ and $\Lambda=-3$.
Thus
\begin{equation}\label{eq:angular-overlap}
\begin{aligned}
 \mathcal T_j\cap\mathcal E_j
 &=\operatorname{span}\{t_0\},\\
 \mathcal T_j+\mathcal E_j
 &=\mathcal M_j
 =\operatorname{span}\{s,t_-,t_0,t_+\}.
\end{aligned}
\end{equation}
The dimension is therefore $3+2-1=4$. This matches the representation
labels through the common channel, as in the higher-dimensional
constructions \cite{tanoudis,chain}.

The operator $B$ connects $t_-$ to $t_0$ and $t_0$ to $t_+$,
while $V$ connects $s$ to $t_0$. Together with the channel projectors
obtained from $K,U$, these matrix elements generate $M_4(\mathbb C)$,
as shown in Proposition~\ref{prop:M4}.

\begin{proposition}\label{prop:M4}
For every integer $j\geq1$, the unital algebra generated by $K,U,B,V$ on the fixed-$j,m$ multiplicity space is $M_4(\mathbb C)$. For $j=0$, adjoining the position-type exchange generator $Q$ gives $M_2(\mathbb C)$ on the two existing channels.
\end{proposition}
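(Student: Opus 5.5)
The plan is to use the standard criterion: a set of matrices that contains a complete family of orthogonal rank-one projectors onto a basis, together with off-diagonal elements linking every basis vector to a fixed one along a connected graph, generates the full matrix algebra. Indeed, if $E_{ii}$ are the projectors and $X_{ik}=E_{ii}XE_{kk}\neq0$ for each edge $(i,k)$ of a connected graph, then the products $E_{ii}XE_{kk}$ are nonzero multiples of the matrix units $e_{ik}$. Composing these along paths produces every $e_{ik}$, so the generated algebra is $M_n(\mathbb C)$.

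For $j\geq1$, work in the ordered channel basis $s,t_-,t_0,t_+$ of $\mathcal M_j$.

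First, the projectors. By the table in Section~\ref{sec:channels}, $K$ takes the value $-3$ on $s$ and $1$ on the triplets, while $U$ takes the values $0$, $2\hbar(j-1)$, $-2\hbar$, $-2\hbar(j+2)$. The triplet values of $U$ are pairwise distinct for $j\geq1$: $2j-2\neq-2$ since $j\neq0$, and the other two inequalities are immediate. Hence $\Pi_s=(\id-K)/4$ and $\Pi_t=(K+3\id)/4$ are polynomials in $K$. Lagrange interpolation in $U$, multiplied by $\Pi_t$, then gives the three triplet projectors. Here one must note that $K$ and $U$ commute on $\mathcal M_j$, being simultaneously diagonal in the channel basis. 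So all four rank-one projectors $E_s,E_-,E_0,E_+$ lie in the unital algebra generated by $K$ and $U$.

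Second, the edges. By Eq.~\eqref{eq:Bmatrix}, $E_-BE_0$ and $E_0BE_+$ are nonzero, with entries $2\sqrt{(j+1)/(2j+1)}$ and $2\sqrt{j/(2j+1)}$, and both are positive for $j\geq1$. For the singlet edge, the $\Lambda$ argument shows that $V$ preserves the $\Lambda=-3$ subspace $\mathcal E_j$ and the outer triplet lines, whose $\Lambda$ values differ from $-3$. On $\mathcal E_j$, $V$ anticommutes with $P$, because $[P,V]=-2\ii Z_L$ with $Z_L=\ii PV$ gives $PV=-VP$. Hence $V$ is off-diagonal between $s$ ($P=-1$) and $t_0$ ($P=+1$). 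Equation~\eqref{eq:Vactive} then gives $V^2=4\hbar^2j(j+1)\neq0$ on this pair, so $E_sVE_0\neq0$. The graph $t_-\!-\!t_0\!-\!t_+$, $s\!-\!t_0$ is a connected star centred at $t_0$, so the criterion gives $M_4(\mathbb C)$.

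For $j=0$, $\mathcal M_0=\operatorname{span}\{s,t_+\}$. Here $K=\operatorname{diag}(-3,1)$ already supplies both projectors. Both channels lie in the $B=0$ block of Eq.~\eqref{eq:positionids}, so $Q^2=4\id-B^2=4\id$ there. Moreover $Q$ anticommutes with $P$, so it is off-diagonal between the singlet and the triplet, with nonzero entry of modulus $2$. Projectors plus one nonzero off-diagonal entry generate $M_2(\mathbb C)$. It is also worth remarking why adjoining $Q$ is necessary at $j=0$. There $U$ and $V$ vanish or act diagonally ($V^2=4\J^2=0$ on the active pair, and there is no $t_0$), and $B=0$, so without $Q$ the algebra is only the diagonal algebra.

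The main obstacle is justifying rigorously that $V$ has no matrix elements between $s$ and the outer triplets, and none between the triplets themselves. The route is through commutation with $\Lambda$: $\Lambda$ commutes with $V$ and is diagonal with the value $-3$ exactly on $\mathcal E_j$. One must check that $4j-3\neq-3$ and $-4j-7\neq-3$ for $j\geq1$, which holds. The nonvanishing of the $s$–$t_0$ element then comes from Eq.~\eqref{eq:Vactive}, which presupposes the identity in Eq.~\eqref{eq:Vsquare}. I would take that identity as given from the paper.
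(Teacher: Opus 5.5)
Your proposal is correct and follows the paper's proof of Proposition~\ref{prop:M4}: channel projectors from $K$ and Lagrange interpolation in $U$, the matrix units isolated by sandwiching $B$ (edges $t_-$--$t_0$--$t_+$) and $V$ (edge $s$--$t_0$) between projectors, connectivity of the star centred at $t_0$, and at $j=0$ the two projectors plus the nonzero off-diagonal $Q$. The differences are cosmetic. You derive the nonzero $s$--$t_0$ entry of $V$ from $\Lambda$, the anticommutation $PV=-VP$ and Eq.~\eqref{eq:Vactive}, whereas the paper simply quotes the entry $v_j=2\hbar\sqrt{j(j+1)}$. The reverse edges follow from Hermiticity of $B$ and $V$, which is the paper's appeal to ``products and adjoints.''
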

\begin{proof}
For $j\geq1$, the singlet projector is a polynomial in $K$. Within the triplet, the three distinct eigenvalues of $U$ give three spectral projectors by polynomial interpolation. For example, with $u_-=2\hbar(j-1)$, $u_0=-2\hbar$, $u_+=-2\hbar(j+2)$,
\begin{equation}
 P_{t_-}=\Pi_t\frac{(U-u_0\id)(U-u_+\id)}{(u_--u_0)(u_--u_+)}.
\end{equation}
The denominators are nonzero at each fixed integer $j\geq1$. The other triplet projectors have the analogous form. The two nonzero off-diagonal entries of $B$ connect $t_-$ to $t_0$ and $t_0$ to $t_+$. The nonzero entry of $V$ connects $s$ to $t_0$. Multiplying on the left and right by the spectral projectors isolates each of these matrix units. Products and adjoints then produce every $E_{ab}$ because the four-channel graph is connected.

At $j=0$, only $s,t_+$ remain. In a suitable phase convention,
\begin{equation}\label{eq:j0matrices}
 P=\begin{pmatrix}-1&0\\0&1\end{pmatrix},\quad
 U=\begin{pmatrix}0&0\\0&-4\hbar\end{pmatrix},\quad
 Q=\begin{pmatrix}0&2\\2&0\end{pmatrix},\qquad B=V=0.
\end{equation}
The diagonal projector and $Q$ generate all two-by-two matrix units.
\end{proof}

At fixed $j,m$, the full matrix image implies that an angular operator commuting with all these generators is scalar on the multiplicity space.

At $j=0$ the $V$-based exchange oscillator vanishes because $\J^2=0$. The nontrivial two-state action is instead supplied by the position-type generator $Q$.

For $j=1$, in the order $s,t_-,t_0,t_+$,
\begin{equation}
 U=\hbar\operatorname{diag}(0,0,-2,-6),\qquad
 K=\operatorname{diag}(-3,1,1,1).
\end{equation}
The nonzero entries of $B$ are $B_{t_-,t_0}=2\sqrt{2/3}$ and $B_{t_0,t_+}=2/\sqrt3$, together with their adjoints. A compatible phase convention gives $V_{s,t_0}=2\sqrt2\hbar$. The triplet oscillator has
\begin{equation}
 \Phi_1(1)=\frac83,\qquad \Phi_1(2)=\frac43,
\end{equation}
and the overlapping exchange square is $V^2=8\hbar^2$ on its active pair. These four channels share every radial level at $j=1$. Including the three magnetic values gives twelve states for each fixed radial number in this sector.

\section{Exact spectrum and eigenfunctions}\label{sec:radial}
\subsection{The radial algebra, lowest weight and spectrum}
The common radial Hamiltonian $H_j$ in Eq.~\eqref{eq:Hj} admits the
standard $\mathfrak{su}(1,1)$ oscillator realization \cite{radialsu11}.
The defining relations are
\begin{equation}\label{eq:su11}
 [\mathsf K_0,\mathsf K_\pm]=\pm\mathsf K_\pm,
 \qquad [\mathsf K_-,\mathsf K_+]=2\mathsf K_0.
\end{equation}
The corresponding factorization methods are discussed in
Refs.~\cite{infeld,arik}. For the radial Hamiltonian $H_j$, a
differential realization is
\begin{align}\label{eq:radialgenerators}
 \mathsf K_0&=\frac{H_j}{2\hbar\omega},\\
 \mathsf K_\pm&=\frac{1}{4\hbar\omega}
 \left[\hbar^2\frac{d^2}{dr^2}+\omega^2r^2
 -\frac{\hbar^2j(j+1)}{r^2}
 \mp\hbar\omega\left(2r\frac{d}{dr}+1\right)\right].\label{eq:radialKpm}
\end{align}
These operators act on reduced radial functions in $L^2(0,\infty;dr)$. The adjoint of $r\partial_r$ on a suitable core is $-r\partial_r-1$, so $\mathsf K_+^\dagger=\mathsf K_-$ there. The Leibniz rule verifies Eq.~\eqref{eq:su11} for this realization.
The Casimir is
\begin{equation}\label{eq:radialCas}
 \mathsf C=\mathsf K_0(\mathsf K_0-1)-\mathsf K_+\mathsf K_-
 =\frac{j(j+1)}4-\frac3{16}.
\end{equation}
The angular generators commute with these radial operators: they act on the multiplicity factor and commute with $j(j+1)$. In the full reduced spinor space, $\hbar^2j(j+1)$ in Eq.~\eqref{eq:radialKpm} can be replaced by $\J^2$.

Since
\begin{equation}
 [H_j,\mathsf K_\pm]=\pm2\hbar\omega\mathsf K_\pm,
\end{equation}
the operators $\mathsf K_\pm$ raise and lower the radial energy and are not integrals of motion.

In a positive discrete-series representation, we write
\begin{equation}
 \mathsf K_0|n;j\rangle=(n+k_j)|n;j\rangle,\qquad
 \mathsf K_-|0;j\rangle=0,\qquad n=0,1,2,\ldots .
\end{equation}
The Casimir equation alone gives two algebraic roots for $k_j$. The physical root is selected by the radial boundary condition. Indeed, the regular lowest-weight function is
\begin{equation}\label{eq:u0}
 u_{0j}(r)=C_j r^{j+1}\exp\left(-\frac{\omega r^2}{2\hbar}\right).
\end{equation}
Substituting into Eqs.~\eqref{eq:radialgenerators} and \eqref{eq:radialKpm} gives
\begin{equation}
 \mathsf K_-u_{0j}=0,\qquad
 \mathsf K_0u_{0j}=\frac{2j+3}{4}u_{0j}.
\end{equation}
Therefore
\begin{equation}\label{eq:lowestweight}
 k_j=\frac{2j+3}{4}.
\end{equation}
The other root, $(1-2j)/4$, is nonpositive for integer $j\geq1$ and its corresponding radial power $r^{-j}$ is not square-integrable at the origin. At $j=0$, it is $1/4$ and corresponds to the different Neumann half-line oscillator, not the Dirichlet realization chosen here.

The structure function of the infinite radial representation is
\begin{equation}\label{eq:phiradial}
 \Phi_{\mathrm{rad},j}(n)=n(n+2k_j-1)=n\left(n+j+\frac12\right).
\end{equation}
It vanishes at the lowest state and is strictly positive for all integers $n\geq1$. Unlike Eqs.~\eqref{eq:Phiangular} and \eqref{eq:PhiV}, it has no finite upper truncation.

\begin{theorem}\label{thm:spectrum}
For the Friedrichs realization of Eq.~\eqref{eq:Hmodel} specified above, the complete bound-state energies are
\begin{equation}\label{eq:Enj}
 E_{n,j}=\hbar\omega\left(2n+j+\frac32\right),
 \qquad n,j=0,1,2,\ldots .
\end{equation}
The joint eigenspace with $H_\omega=E_{n,j}$ and $\J^2=\hbar^2j(j+1)$ has dimension $4(2j+1)$ for $j\geq1$ and two for $j=0$, including the magnetic values. The degeneracy of the energy alone requires the further sum in Section~\ref{sec:multiplicities}.
\end{theorem}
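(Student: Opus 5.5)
The plan is to reduce everything to the operator-level decomposition \eqref{eq:methoddecomposition}--\eqref{eq:architecturefactor}. There $H_\omega$ acts as $H_j\otimes\id_{\mathcal M_j}\otimes\id_{\mathcal V_j}$ on each summand, with the common Friedrichs/Dirichlet boundary condition in every channel at fixed $j$. The spectrum of $H_\omega$ is then the union over $j$ of the spectra of the self-adjoint half-line operators $H_j$. Each eigenvalue of $H_j$ appears with multiplicity $\mu_j(2j+1)$ inside the $\J^2=\hbar^2j(j+1)$ sector. Both claims therefore follow once I show two things. First, $H_j$ has purely discrete, simple spectrum $\{2\hbar\omega(n+k_j)\}_{n\ge0}$ with $k_j=(2j+3)/4$. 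Second, this list exhausts the spectrum.

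For the energies, I would use the positive discrete-series representation. The vector $u_{0j}$ in \eqref{eq:u0} lies in the Friedrichs domain: it behaves as $r^{j+1}$ at the origin, with $u(0)=0$ also at $j=0$, and decays as a Gaussian. It satisfies $\mathsf K_-u_{0j}=0$ and $\mathsf K_0u_{0j}=k_ju_{0j}$. I then define $u_{nj}\propto\mathsf K_+^nu_{0j}$. The relation $[H_j,\mathsf K_+]=2\hbar\omega\mathsf K_+$ gives $H_ju_{nj}=2\hbar\omega(n+k_j)u_{nj}=\hbar\omega(2n+j+\tfrac32)u_{nj}$. The norms satisfy $\|\mathsf K_+u_{nj}\|^2=\Phi_{\mathrm{rad},j}(n+1)\|u_{nj}\|^2>0$ by \eqref{eq:phiradial}, so no state vanishes. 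Explicitly, $u_{nj}=r^{j+1}e^{-\omega r^2/2\hbar}\,p_n(r^2)$ with $p_n$ of degree $n$, namely a Laguerre polynomial $L_n^{(j+1/2)}(\omega r^2/\hbar)$. This shows the vectors stay in the domain, since they are polynomial times the ground state.

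The main obstacle is \emph{completeness}: showing no other eigenvalues or continuous spectrum exist. I would argue this in two ways, either of which suffices. The first is analytic. Since $\omega^2r^2/2\to\infty$, the Friedrichs operator $H_j$ has compact resolvent, so its spectrum is purely discrete. The eigenfunctions $\{r^{j+1}e^{-\omega r^2/2\hbar}L_n^{(j+1/2)}(\omega r^2/\hbar)\}_n$ are, after the substitution $t=\omega r^2/\hbar$, the complete generalized Laguerre system in $L^2(0,\infty;dr)$. Hence their closed span is the whole radial space. The second is algebraic. Any eigenvector $\psi$ with eigenvalue $E$ gives a descending chain $\mathsf K_-^m\psi$. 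Since $H_j\ge0$, and indeed $H_j$ is bounded below by its quadratic form, the chain must terminate. The terminal vector solves the first-order-reduced equation $\mathsf K_-u=0$ together with the $\mathsf K_0$ eigenvalue condition. Its solutions are $r^{j+1}e^{-\omega r^2/2\hbar}$ and $r^{-j}e^{-\omega r^2/2\hbar}$, and only the former lies in the Friedrichs domain, as remarked after \eqref{eq:lowestweight}. So $E=2\hbar\omega(n+k_j)$, and the eigenspace is one-dimensional by uniqueness of the regular solution of the second-order ODE.

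Multiplicities then follow by tensoring. At fixed $(n,j)$ the joint eigenspace is $\operatorname{span}\{u_{nj}\}\otimes\mathcal M_j\otimes\mathcal V_j$. Its dimension is $4(2j+1)$ for $j\ge1$ and $2\cdot1$ for $j=0$, by the channel count \eqref{eq:channels} and $\mu_0=2$. Proposition~\ref{prop:M4} is not needed for the dimension count. It only shows that the angular integrals act irreducibly on this $\mathcal M_j$ factor. Finally, I would note that different $(n,j)$ with equal $2n+j$ share an energy, which is why the statement defers the energy-only degeneracy to Section~\ref{sec:multiplicities}.
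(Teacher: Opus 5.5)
Your proposal is correct and follows the paper's route. The energies come from $H_j=2\hbar\omega\mathsf K_0$ with the boundary-selected lowest weight $k_j=(2j+3)/4$, completeness from the Laguerre basis, and the dimensions from counting the channels of $\mathcal M_j$ times the $2j+1$ magnetic values; your compact-resolvent remark and the terminating $\mathsf K_-$-chain argument simply make explicit the completeness step that the paper only sketches.
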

\begin{proof}
The energy follows from $H_j=2\hbar\omega\mathsf K_0$ and Eq.~\eqref{eq:lowestweight}. Each fixed-$j$ radial operator has the regular discrete-series basis, whose explicit complete Laguerre realization is given below. The angular decomposition of the full four-spinor space contains exactly the channels in Eq.~\eqref{eq:channels}, or the two actual channels at $j=0$. The radial operator is identical on these factors. Counting their dimensions and the $2j+1$ magnetic values gives the stated multiplicities. No additional radial channel is omitted by this decomposition.
\end{proof}

\subsection{Normalized eigenfunctions and ladder actions}
For a direct solution of the radial equation, we extract the regular
factor $r^{j+1}$ and the Gaussian decay \cite{arik,radialsu11}.
We set $a=\omega/\hbar$, $x=ar^2$, $\alpha_j=j+\tfrac12$ and
$u=r^{j+1}e^{-x/2}y(x)$. Substitution into Eq.~\eqref{eq:Hj} gives
\begin{equation}\label{eq:Laguerrecheck}
 xy''+(\alpha_j+1-x)y'+
 \left[\frac{E}{2\hbar\omega}-\frac{2j+3}{4}\right]y=0.
\end{equation}
For the solution regular at the origin, square integrability at infinity
requires the coefficient of $y$ to be a nonnegative integer $n$.
The polynomial solution is then $L_n^{(\alpha_j)}(x)$, and this
recovers Eq.~\eqref{eq:Enj}. The other local solution is excluded by
the boundary condition specified in Section~\ref{sec:channels}.

The normalization follows from the Laguerre orthogonality relation
and the radial change of variable:
\begin{align}
 \int_0^\infty x^{\alpha_j}e^{-x}
 L_n^{(\alpha_j)}(x)L_{n'}^{(\alpha_j)}(x)\,dx
 &=\frac{\Gamma(n+\alpha_j+1)}{n!}\delta_{nn'},\\
 r^{2j+2}\,dr&=\frac{x^{j+1/2}}{2a^{j+3/2}}\,dx.
\end{align}
Thus the normalized radial functions, with phases chosen to give
positive ladder amplitudes, are
\begin{equation}\label{eq:unormalized}
 u_{nj}(r)=(-1)^n
 \left[\frac{2a^{j+3/2}n!}{\Gamma(n+j+3/2)}\right]^{1/2}
 r^{j+1}e^{-ar^2/2}L_n^{(j+1/2)}(ar^2).
\end{equation}
They satisfy $\int_0^\infty u_{nj}^*u_{n'j}\,dr=\delta_{nn'}$.
The Laguerre polynomials are complete in the corresponding weighted
space, so these functions form a complete radial basis at fixed $j$.
For an orthonormal coupled angular spinor $\mathcal Y_{\eta}^{jm}$, the full state is
\begin{equation}\label{eq:fullstate}
 \Psi_{njm\eta}(\x)=\frac{u_{nj}(r)}{r}\mathcal Y_{\eta}^{jm}(\bm n).
\end{equation}
The angular spinors form a complete basis on the sphere with the two spin factors, while the Laguerre functions form the radial basis in each channel. Together these facts prove completeness of the displayed eigenfunctions for the specified realization. Since the energies tend to infinity with $n+j$ and each energy has finite multiplicity, this realization has no additional continuous spectrum.

The radial algebra acts by
\begin{align}\label{eq:radialactions}
 \mathsf K_+u_{nj}&=\sqrt{(n+1)(n+j+\tfrac32)}\,u_{n+1,j},\\
 \mathsf K_-u_{nj}&=\sqrt{n(n+j+\tfrac12)}\,u_{n-1,j},\label{eq:radialactionminus}\\
 \mathsf K_0u_{nj}&=\left(n+\frac{2j+3}{4}\right)u_{nj}.\label{eq:radialactionzero}
\end{align}
The angular operators leave $n,j$ unchanged and act by the finite matrices already displayed. The radial operators leave the angular channel and magnetic number unchanged. The two actions commute at fixed $j$.

\subsection{The complete spectral basis}\label{sec:complete-solution}
For the angular factor in Eq.~\eqref{eq:fullstate}, we use the
Condon--Shortley convention for the spherical harmonics and
Clebsch--Gordan coefficients. The spin states are
\begin{gather*}
 \chi_{00}=\frac{|\uparrow\downarrow\rangle-|\downarrow\uparrow\rangle}{\sqrt2},
 \qquad \chi_{11}=|\uparrow\uparrow\rangle,\\
 \chi_{10}=\frac{|\uparrow\downarrow\rangle+|\downarrow\uparrow\rangle}{\sqrt2},
 \qquad \chi_{1,-1}=|\downarrow\downarrow\rangle.
\end{gather*}
The common channel phases used in the angular matrices are fixed by
\begin{align}
 \mathcal Y_s^{jm}(\bm n)&=Y_{jm}(\bm n)\chi_{00},\label{eq:coupled-singlet}\\
 \mathcal Y_{t_\nu}^{jm}(\bm n)&=
 \varepsilon_\nu\sum_{m_l,m_s}\langle l_\nu m_l,1m_s|jm\rangle
 Y_{l_\nu m_l}(\bm n)\chi_{1m_s},\label{eq:coupled-triplet}
\end{align}
where $l_-=j-1$, $l_0=j$, $l_+=j+1$, and
$\varepsilon_-=\varepsilon_+=1$, $\varepsilon_0=-1$.
Appendix~\ref{app:physical-matrices} gives the corresponding physical matrix elements.

For $j\geq1$ all four angular labels $\eta=s,t_-,t_0,t_+$ occur. For $j=0$, angular momentum coupling leaves only $\eta=s,t_+$. With $a=\omega/\hbar$, a complete orthonormal bound-state basis of the specified realization is given by
\begin{equation}\label{eq:completewavefunction}
 \Psi_{njm\eta}(\x)=(-1)^n
 \left[\frac{2a^{j+3/2}n!}{\Gamma(n+j+3/2)}\right]^{1/2}
 r^j e^{-ar^2/2}L_n^{(j+1/2)}(ar^2)
 \mathcal Y_{\eta}^{jm}(\bm n).
\end{equation}
The allowed quantum numbers and energies are
\begin{equation}\label{eq:completequantumnumbers}
 n=0,1,2,\ldots,\qquad j=0,1,2,\ldots,\qquad
 m=-j,-j+1,\ldots,j,
 \qquad
 E_{n,j}=\hbar\omega\left(2n+j+\frac32\right),
\end{equation}
with the channel restrictions just stated. The state is a $K=-3$ singlet for $\eta=s$ and a $K=1$ triplet for $\eta=t_\nu$. The operator $U=\Sig\cdot\Lop$ has eigenvalues
\begin{equation}
 U(s)=0,\qquad U(t_-)=2\hbar(j-1),\qquad
 U(t_0)=-2\hbar,\qquad U(t_+)=-2\hbar(j+2).
\end{equation}
The three number states of the quadratic oscillator correspond to $t_-,t_0,t_+$. The $p=1$ exchange oscillator acts on $s,t_0$, whereas the position-type exchange operator $Q$ links the singlet to a definite combination of the outer triplets and supplies the nontrivial $j=0$ exchange action.

For fixed $(n,j)$, including all magnetic values, the multiplicity is
\begin{equation}\label{eq:fixednjmultiplicity}
 d_{n,j}=d_j=
 \begin{cases}
 2,&j=0,\\
 4(2j+1),&j\geq1.
 \end{cases}
\end{equation}
This is the multiplicity at fixed $n,j$. The degeneracy of the energy
alone also includes distinct pairs $(n,j)$ with the same value of
$2n+j$, as discussed in Section~\ref{sec:multiplicities}.

\subsection{Operator domains and the combined representation}\label{sec:domains}
The angular and radial operators must act on a common invariant domain
compatible with the chosen boundary condition. The finite linear span
of the states in Eq.~\eqref{eq:fullstate} provides a common invariant core. On this core, all angular integrals preserve $n,j$ and satisfy their exact finite matrix relations. The radial actions in Eqs.~\eqref{eq:radialactions}--\eqref{eq:radialactionzero} take a finite combination to another finite combination. Both actions preserve the regular radial branch.

The Hamiltonian is the orthogonal direct sum of the Friedrichs radial operators $H_j$, tensored with the existing finite angular multiplicity spaces and magnetic factors. At each fixed $j,m$, $U,V$ are Hermitian finite matrices, while $P,B,Q,R$ are bounded finite matrices. Their natural direct-sum closures give the corresponding angular operators. Because they act inside the energy eigenspaces, they commute with the spectral projections of $H_\omega$ on their natural domains.

The radial ladder pair has the adjoint relation and positive discrete-series action described above. At fixed $j,m$, the combined representation can be written schematically as
\begin{equation}\label{eq:fullrepresentedstructure}
 D^+(k_j)\otimes\mathbb C^4\quad(j\geq1),\qquad
 D^+(\tfrac34)\otimes\mathbb C^2\quad(j=0).
\end{equation}
On the common invariant core, the first factor carries the radial spectrum-generating algebra, and the second carries the represented angular algebra. The two actions commute.

The table summarizes these representations on the unprojected
relative-coordinate Hilbert space.
\begin{center}
\begin{tabular}{@{}>{\raggedright\arraybackslash}p{0.19\linewidth}>{\raggedright\arraybackslash}p{0.23\linewidth}>{\raggedright\arraybackslash}p{0.22\linewidth}>{\raggedright\arraybackslash}p{0.26\linewidth}@{}}
\toprule
Structure & Generators & Central labels & Information obtained\\
\midrule
Triplet quadratic algebra & $A=U$, $B$, $C=[A,B]$ & $K,\J^2$ & three triplet orbital channels; $p=2$ finite oscillator\\
Middle exchange algebra & $P,V,Z_L$ & $\Lambda=K+2U/\hbar$, $\J^2$ & singlet--middle-triplet pair; $p=1$ finite oscillator\\
Position exchange algebra & $P,Q,R$ & $B$ (with $Q^2=R^2=4\id-B^2$), $\J^2$ & complementary singlet--outer exchange and nontrivial $j=0$ block\\
Full angular image & $K,U,B,V$ for $j\geq1$; $K,U,Q$ at $j=0$ & fixed $j,m$ & $M_4(\mathbb C)$ for $j\geq1$; $M_2(\mathbb C)$ for $j=0$\\
Radial dynamical algebra & $\mathsf K_0,\mathsf K_\pm$ & fixed $j$ & positive discrete series $D^+(k_j)$ and the energy ladder\\
\bottomrule
\end{tabular}
\end{center}

\section{Level multiplicities and particle statistics}\label{sec:multiplicities}
\subsection{Joint angular multiplicities and complete energy shells}
The multiplicity $d_j$ counts states at fixed $n,j$. To obtain the
degeneracy of an energy level, we must add all such spaces with the same
value of $2n+j$. We therefore introduce the shell number
\begin{equation}
 N=2n+j,\qquad E_N=\hbar\omega(N+\tfrac32).
\end{equation}
At fixed $N$, the allowed total angular momenta are $j=N,N-2,\ldots$. If $d_j$ denotes the angular-spin multiplicity including $m$, then
\begin{equation}
 d_0=2,\qquad d_j=4(2j+1)\quad(j\geq1),\qquad
 D_N=\sum_{j=N,N-2,\ldots}d_j.
\end{equation}
We distinguish the even and odd shells by writing $N=2q$ or $N=2q+1$. The $j=0$ term contributes two states in the even case and is absent in the odd case:
\begin{align}\label{eq:evenoddcounts}
 D_{2q}&=2+4\sum_{a=1}^{q}(4a+1)=8q^2+12q+2,\\
 D_{2q+1}&=4\sum_{a=0}^{q}(4a+3)=8q^2+20q+12.
\end{align}
Equivalently,
\begin{equation}\label{eq:DN}
 D_N=2(N+1)(N+2)-2\,\mathbf1_{\{N\ \mathrm{even}\}}.
\end{equation}
For example, at $N=2$ the $j=2$ sector contributes twenty states and the $j=0$ sector contributes two. The two-state correction in every even shell is the consequence of the actual $j=0$ angular coupling.

Equation~\eqref{eq:DN} counts the full unprojected relative-coordinate spectrum. No center-of-mass oscillator or further internal degree of freedom has been included.

The multiplicity generating function is
\begin{equation}\label{eq:GF}
 \sum_{N=0}^\infty D_Nz^N=\frac4{(1-z)^3}-\frac2{1-z^2},\qquad |z|<1.
\end{equation}
Indeed,
\begin{equation}
 \sum_{j=0}^{\infty}d_jz^j
 =4\frac{1+z}{(1-z)^2}-2,
\end{equation}
and multiplication by $1/(1-z^2)$ accounts for the sum over the radial quantum number. The subtraction is exactly the missing two channels at $j=0$, not a correction at a single energy. Thus the radial and angular counts agree independently. A further check separates the spin sectors. The singlet is an ordinary scalar oscillator and contributes $(N+1)(N+2)/2$ states. The triplet contributes $3(N+1)(N+2)/2-2\mathbf1_{\{N\ \mathrm{even}\}}$, since its $j=0$ multiplicity is one rather than three. Their sum again gives Eq.~\eqref{eq:DN}.

The scalar three-dimensional oscillator has the usual $SU(3)$ degeneracy interpretation \cite{jauch,fradkin}. Here $j$ is total angular momentum, and the reduced channel count at $j=0$ prevents an interpretation as four independent scalar oscillators. The constructed angular generators act at fixed $j$, while the radial ladders change the energy.

\subsection{Particle statistics, multiplicities and surviving angular symmetries}\label{sec:statistics}
For identical spin-$\tfrac12$ fermions without additional flavor or
isospin degrees of freedom, only states antisymmetric under complete
particle interchange are allowed. The symmetric sector is included
for comparison.

The involution $P$ exchanges the spin factors only. Let $\mathscr R$
denote inversion of the relative coordinate and let
$\mathscr P=\mathscr R P$ denote complete particle interchange.
The spatial parity of a channel is $(-1)^l$, while spin exchange
has eigenvalue $-1$ on the singlet and $+1$ on the triplet.
Combining these signs gives, at fixed $j\geq1$ in the order
$s,t_-,t_0,t_+$,
\begin{equation}\label{eq:statsgrading}
 \mathscr P\big|_{j,m}=(-1)^j\operatorname{diag}(-1,-1,+1,-1).
\end{equation}
The space $\mathcal W_3=\operatorname{span}\{s,t_-,t_+\}$ has one
interchange eigenvalue, and $t_0$ has the opposite one.
For even $j\geq2$, the three channels in $\mathcal W_3$ are
antisymmetric and $t_0$ is symmetric; for odd $j$ their roles are
reversed. At $j=0$, both existing channels $s,t_+$ are antisymmetric.

We obtain the statistics-resolved multiplicities by retaining the
appropriate channels, including the magnetic factor $2j+1$, and
summing over $j=N,N-2,\ldots$ as in the preceding subsection.
The antisymmetric multiplicity is
\begin{equation}
 D_N^{(-)}=
 \begin{cases}
 \dfrac32(N+1)(N+2)-1,&N\text{ even}, \smallskip\\
 \dfrac12(N+1)(N+2),&N\text{ odd},
 \end{cases}
\end{equation}
and the symmetric multiplicity is
\begin{equation}\label{eq:symmetriccounts}
 D_N^{(+)}=
 \begin{cases}
 \dfrac12(N+1)(N+2)-1,&N\text{ even},\smallskip \\
 \dfrac32(N+1)(N+2),&N\text{ odd}.
 \end{cases}
\end{equation}
Their sum is Eq.~\eqref{eq:DN}. The first shells are listed in
Table~\ref{tab:shellcounts}; the pairs in its second column are $(n,j)$.
These counts impose particle statistics on the unprojected spectrum
without changing the energy formula or the radial functions.

\begin{table}[htbp]
\centering
\begin{tabular}{@{}c l r r r@{}}
\toprule
$N$ & Allowed $(n,j)$ & $D_N$ & $D_N^{(-)}$ & $D_N^{(+)}$\\
\midrule
0 & $(0,0)$ & 2 & 2 & 0\\
1 & $(0,1)$ & 12 & 3 & 9\\
2 & $(0,2),(1,0)$ & 22 & 17 & 5\\
3 & $(0,3),(1,1)$ & 40 & 10 & 30\\
4 & $(0,4),(1,2),(2,0)$ & 58 & 44 & 14\\
5 & $(0,5),(1,3),(2,1)$ & 84 & 21 & 63\\
\bottomrule
\end{tabular}
\caption{Full energy-shell degeneracy and its antisymmetric/symmetric parts. Every row has $E_N=\hbar\omega(N+3/2)$. The magnetic factor is already included.}\label{tab:shellcounts}
\end{table}

Conservation of energy alone does not ensure preservation of particle
statistics. To preserve the two interchange sectors, an angular integral
must also commute with $\mathscr P$.
The position-type exchange operator $Q$ is odd under both spatial parity and spin exchange, so it commutes with their product. The angular scalars $P,U$ also commute with $\mathscr P$. In contrast, the triplet pseudo-scalar $B$ and the exchange-odd scalar $V$ reverse $\mathscr P$:
\begin{equation}
 [\mathscr P,P]=[\mathscr P,U]=[\mathscr P,Q]=0,
 \qquad \{\mathscr P,B\}=\{\mathscr P,V\}=0.
\end{equation}
Thus the original nested construction with $B,V$ describes the unprojected four-spinor space. Its individual generators are not all observables in one fixed-statistics sector.

\begin{proposition}\label{prop:statsalgebra}
For $j\geq1$, the fixed-$j,m$ associative algebra generated by $P,U,Q$ is
\begin{equation}\label{eq:statsM3}
 \operatorname{Alg}(P,U,Q)\big|_{j,m}
 =M_3(\mathbb C)\oplus\mathbb C,
\end{equation}
with $M_3$ acting on $\mathcal W_3$ and the scalar block acting on $t_0$. It is exactly the part of the full $M_4$ angular image that commutes with $\mathscr P$. At $j=0$ its image is $M_2(\mathbb C)$, entirely in the antisymmetric sector.
\end{proposition}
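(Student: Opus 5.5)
First I will settle the block structure and the matrix forms. By Eq.~\eqref{eq:statsgrading}, $\mathscr P$ is $\pm$ the identity on $\mathcal W_3=\operatorname{span}\{s,t_-,t_+\}$ and has the opposite sign on $t_0$. The generators $P,U,Q$ commute with $\mathscr P$, so each leaves $\mathcal W_3$ and $\operatorname{span}\{t_0\}$ invariant. Hence $\operatorname{Alg}(P,U,Q)|_{j,m}\subseteq \operatorname{End}(\mathcal W_3)\oplus\operatorname{End}(\operatorname{span}\{t_0\})\cong M_3(\mathbb C)\oplus\mathbb C$.

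For the reverse inclusion I will work in the order $s,t_-,t_+$ on $\mathcal W_3$ and use three facts.
\begin{itemize}
\item $P=\operatorname{diag}(-1,1,1)$ on $\mathcal W_3$ and $P=1$ on $t_0$.
\item $U=\hbar\operatorname{diag}(0,2j-2,-2j-4)$ on $\mathcal W_3$ and $U=-2\hbar$ on $t_0$, from the table in Section~\ref{sec:channels}.
\item $Q$ vanishes on $t_0$. This follows from Eq.~\eqref{eq:positionids}: $t_0$ lies in the $B=0$ block only together with $s$, but $Q$ commutes with $\mathscr P$ and therefore cannot connect $s$ and $t_0$. Moreover $Q$ anticommutes with $P$, so on $\mathcal W_3$ it only has entries $Q_{s,t_\pm}$. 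Section~\ref{sec:complete-solution} already states that $Q$ links $s$ to a definite combination of the outer triplets.
\end{itemize}

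Next I will generate the matrix units. The eigenvalues $0,\ 2\hbar(j-1),\ -2\hbar(j+2),\ -2\hbar$ of $U$ on $s,t_-,t_+,t_0$ are pairwise distinct for $j\geq2$. Lagrange interpolation in $U$, as in the proof of Proposition~\ref{prop:M4}, then gives all four rank-one projectors. At $j=1$ the values on $s$ and $t_-$ coincide, and there I separate them by multiplying by $\Pi_s=(1-P)/2$ or $\Pi_t$. In every case the projectors $E_{ss},E_{t_-t_-},E_{t_+t_+},E_{t_0t_0}$ lie in the algebra. Compressing $Q$ gives $E_{t_\pm t_\pm}QE_{ss}=Q_{t_\pm s}E_{t_\pm s}$.

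The step I expect to be the main obstacle is showing that both $Q_{t_-s}$ and $Q_{t_+s}$ are nonzero. For this I will use $Q^2=4-B^2$ from Eq.~\eqref{eq:positionids}. On the singlet $B=0$, so $|Q_{t_-s}|^2+|Q_{t_+s}|^2=4$. On the outer triplets, $B^2$ restricted to $\operatorname{span}\{t_-,t_+\}$ follows from Eq.~\eqref{eq:Bmatrix}: its diagonal entries are $4(j+1)/(2j+1)$ and $4j/(2j+1)$. Therefore $|Q_{t_-s}|^2=4-4(j+1)/(2j+1)=4j/(2j+1)>0$ and $|Q_{t_+s}|^2=4(j+1)/(2j+1)>0$. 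Alternatively, I can read these entries off Appendix~\ref{app:physical-matrices}.

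With both entries nonzero, the graph $t_-\!-\!s\!-\!t_+$ is connected. Products and adjoints then give every matrix unit of $\mathcal W_3$, and $E_{t_0t_0}$ gives the scalar block. So the algebra equals $M_3\oplus\mathbb C$.

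The second claim is then immediate. An element of $M_4(\mathbb C)$, which is the full image by Proposition~\ref{prop:M4}, commutes with $\mathscr P=\pm\operatorname{diag}(1,1,1,-1)$, after reordering the basis, exactly when it is block diagonal in this decomposition. The commutant of $\mathscr P$ is therefore $M_3\oplus\mathbb C$, which coincides with the algebra just obtained.

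Finally, at $j=0$ both channels $s,t_+$ are antisymmetric. Eq.~\eqref{eq:j0matrices} together with the argument of Proposition~\ref{prop:M4} shows that $P$ and $Q$ already generate $M_2(\mathbb C)$ on these two channels.
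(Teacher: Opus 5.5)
Your proposal is correct and follows the paper's proof. The grading \eqref{eq:statsgrading} gives the upper bound. $P$ together with interpolation in $U$ gives the rank-one channel projectors, with $P$ separating $s$ from $t_-$ at $j=1$. Compressing $Q$ along the connected graph $t_-$--$s$--$t_+$ then produces every matrix unit on $\mathcal W_3$. Your derivation of $|Q_{t_-s}|^2=4j/(2j+1)$ and $|Q_{t_+s}|^2=4(j+1)/(2j+1)$ from $Q^2=4\id-B^2$ is a self-contained check of what the paper reads off from the matrices of Appendix~\ref{app:physical-matrices}.

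One justification needs repair. $t_0$ does not lie in the $B=0$ block: that block is spanned by $s$ and the combination of $t_-,t_+$ in the kernel of the matrix \eqref{eq:Bmatrix}, while $Bt_0\neq0$. The conclusion $Qt_0=0$ still follows from facts you already state. Since $[\mathscr P,Q]=0$, you have $Qt_0=c\,t_0$, and $\{P,Q\}=0$ with $Pt_0=t_0$ then gives $c=-c=0$.
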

\begin{proof}
Equation~\eqref{eq:statsgrading} gives the upper bound $M_3\oplus\mathbb C$. To prove that the selected generators reach it, use $P$ to isolate the singlet and triplet. The three eigenvalues $u_-,u_0,u_+$ of $U$ are distinct inside the triplet for every $j\geq1$, so the individual triplet projectors are
\begin{equation}
 P_a=\frac{1+P}{2}\prod_{b\ne a}\frac{U-u_b}{u_a-u_b},
 \qquad a,b\in\{-,0,+\}.
\end{equation}
The singlet projector is $\Pi_s=(1-P)/2$. The nonzero entries of $Q$ connect $s$ to both $t_-$ and $t_+$. Sandwiching $Q$ between the projectors yields the corresponding matrix units; their products give the remaining units on $\mathcal W_3$. The projector onto $t_0$ is the independent scalar block. Even at $j=1$, where the singlet and $t_-$ have equal $U$ eigenvalues, $P$ separates them. At $j=0$, $P$ and the nonzero two-state $Q$ give $M_2$ directly.
\end{proof}

This $M_3$ acts on the singlet and the two outer triplets.

The three-state block can also be expressed as a finite oscillator
using the matrix units just constructed. On $\mathcal W_3$ we use the
ordered basis $t_-,s,t_+$ and label its states $0,1,2$. Let $E_{ab}$
denote the normalized matrix units in this basis. The operators
\begin{equation}
 N_{\rm st}=E_{11}+2E_{22},\qquad
 b_{\rm st}^\dagger=\sqrt2(E_{10}+E_{21}),\qquad
 b_{\rm st}=\sqrt2(E_{01}+E_{12})
\end{equation}
provide a finite oscillator with $b_{\rm st}^\dagger b_{\rm st}=N_{\rm st}(3-N_{\rm st})$. These operators are normalized within the fixed-$j$ matrix representation and differ from the original quadratic pair $(U,B)$.

The surviving representations are therefore
\begin{center}
\begin{tabular}{lll}
\toprule
Total angular momentum & Antisymmetric sector & Symmetric sector\\
\midrule
$j=0$ & two-state $M_2$ & absent\\
$j\geq2$ even & three-state $M_3$ & one-dimensional middle triplet\\
$j\geq1$ odd & one-dimensional middle triplet & three-state $M_3$\\
\bottomrule
\end{tabular}
\end{center}
Each entry is repeated for the $2j+1$ magnetic values and the allowed radial levels. The radial $\mathfrak{su}(1,1)$ action commutes with $\mathscr P$ and survives in every nonempty sector. Summing this table at $N=2n+j$ reproduces the symmetric and antisymmetric shell counts above. Additional isospin or other internal variables would change the statistics restriction and are not part of this model.

\section{Discussion and conclusions}\label{sec:conclusion}
We have constructed representations of the inherited angular integrals
for the oscillator Hamiltonian in Eq.~\eqref{eq:Hmodel}. The quadratic
algebra acts on the three triplet channels through a finite oscillator
with $p=2$. The orbital exchange algebra couples the singlet to the
middle triplet through a $p=1$ representation. Their common channel
allows the two actions to combine, and the resulting matrix image is
$M_4(\mathbb C)$ at fixed $j,m$ for $j\geq1$. At $j=0$, the
position-type exchange generator instead gives $M_2(\mathbb C)$
on the two existing channels.

The common radial Hamiltonian is solved independently through its
$\mathfrak{su}(1,1)$ representation. The Friedrichs boundary condition
selects the lowest weight and hence the energy spectrum. Combining
the radial functions with the angular spinors gives the normalized
spectral basis and the ladder actions of Section~\ref{sec:radial}.
Summing the angular channels then gives the complete shell
multiplicities. Complete particle interchange restricts these states
and leaves the statistics-preserving part of the generic angular image
$M_3(\mathbb C)\oplus\mathbb C$. The radial algebra remains valid in
each nonempty statistics sector.

Relative to the scalar-potential applications of Refs.~\cite{bonatsos,dask,tanoudis,chain}, the characteristic feature is the finite matrix multiplicity remaining after total-angular-momentum decomposition. Spin exchange supplies a simple way to organize that multiplicity, while the central parameters change between the selected subalgebras. Relative to the first-order spin classifications \cite{scalar,pseudo,vector}, the emphasis is on realizing the inherited integrals on a complete spectral basis and determining which parts survive a statistics restriction. The radial $\mathfrak{su}(1,1)$ method itself is standard \cite{radialsu11}; its combination with these angular modules yields the stated energy and multiplicity result for the displayed Hamiltonian.

For a different scalar potential in the same family the angular representation may persist while radial solvability does not.

\appendix
\section{Complete Case 1 source integrals}\label{app:source-integrals}
The seven Case~1 pseudo-scalar integrals used in
Section~\ref{sec:complete-integrals} are, in the normalization of
Ref.~\cite{pseudo} and with $r=|\x|$ and $\p=-\ii\hbar\nabla$,
\begin{align}
 Y_P^1&=-\frac1r(\sone,\x)(\stwo,\Lop)
 +\frac{\ii\hbar}{2r}(\x,\sone\times\stwo),\label{eq:YP1app}\\
 Y_P^2&=\frac1r(\stwo,\x)(\x,\p)-r(\stwo,\p)
 -\frac{\ii\hbar}{r}(\stwo,\x),\label{eq:YP2app}\\
 Y_P^3&=r(\sone,\p)-\frac1r(\sone,\x)(\x,\p)
 +\frac{\ii\hbar}{r}(\sone,\x),\label{eq:YP3app}\\
 Y_P^4&=-\frac1r\Big[(\stwo,\x)(\sone,\Lop)
 -(\sone,\x)(\stwo,\Lop)
 +\ii\hbar(\x,\sone\times\stwo)\Big],\label{eq:YP4app}\\
 Y_P^5&=\frac{(\x,\sone\times\stwo)}{r},\qquad
 Y_P^6=\frac{(\sone,\x)}{r},\qquad
 Y_P^7=\frac{(\stwo-\sone,\x)}{r}.\label{eq:YP567app}
\end{align}
For the oscillator specialization used in this paper, all seven remain integrals because the scalar potential $V_0(r)$ is arbitrary in Case~1.  The first four are not required as independent generators of the finitely generated source algebra.  With $X=(\stwo,\Lop)$ and $Y=(\sone,\Lop)$, Ref.~\cite{pseudo} gives
\begin{align}
 Y_P^1&=-Y_P^6X+\frac{\ii\hbar}{2}Y_P^5,\label{eq:YP1poly}\\
 Y_P^2&=-\ii(Y_P^6+Y_P^7)(X+\hbar),\label{eq:YP2poly}\\
 Y_P^3&=\ii Y_P^6(Y+\hbar),\label{eq:YP3poly}\\
 Y_P^4&=Y_P^6X-(Y_P^6+Y_P^7)Y-\ii\hbar Y_P^5.\label{eq:YP4poly}
\end{align}
The identity
\begin{equation}
 \frac{1}{r^2}(\sone,\x)(\stwo,\x)=Y_P^6(Y_P^6+Y_P^7)
\end{equation}
is another useful product reduction.  These relations explain why the main text can work with the smaller set $\mathcal G$ in Eq.~\eqref{eq:sourceG} without losing any of the seven pseudo-scalars.

\section{Source commutators and the selected subalgebras}\label{app:source-commutators}
In addition to Eq. \eqref{eq:sourcemixedlast},
the source integrals satisfy
\begin{align}
 [Y,Y_P^5]&=2\ii Y_P^1-2\ii\hbar(Y_P^6+Y_P^7),\label{eq:sourceapp1}\\
 [X,Y_P^5]&=-2\ii(Y_P^1+Y_P^4)+2\ii\hbar Y_P^6,\label{eq:sourceapp2}\\
 [Y,Z]&=2\ii K\Lop^2-\ii(YX+XY)-2\ii\hbar X,\label{eq:sourceapp3}\\
 [X,Z]&=-2\ii K\Lop^2+\ii(YX+XY)+2\ii\hbar Y,\label{eq:sourceapp4}
\end{align}
where
\begin{equation}
 \Lop^2=\J^2-\hbar(X+Y)-\frac{\hbar^2}{4}(6+2K).
\end{equation}
Together with Eqs.~\eqref{eq:YP1poly}--\eqref{eq:YP4poly}, these relations close polynomially on $\mathcal G$.  Every member of $\mathcal G$ commutes with $H_\omega$ and with the total rotational Casimir $\J^2$.  The role of $K$ depends on the selected subalgebra: it is central in the triplet quadratic algebra generated by $U=X+Y$, $B=(\Sig,\x)/r$ and $T=Y_P^2-Y_P^3$, but it is noncentral in the exchange sectors.  Conversely,
\begin{equation}
 \Lambda=K+\frac{2U}{\hbar}
\end{equation}
is central in the $(P,V,Z_L)$ exchange algebra.

In the physical angular representation, the nontrivial mixed commutators among $U,V,Z_L$ and $B,Q,R$ in the working basis are
\begin{align}
 [U,B]&=-BU-QV-2\hbar B, \nonumber &
 [U,Q]&=-BV-QU-2\hbar Q-2\ii\hbar R,\nonumber  \\
 [U,R]&=\ii BV-\ii QU+2\ii\hbar Q-2\hbar R, &
 [V,B]&=-BV-QU-2\hbar Q+2\ii\hbar R, \nonumber \\
 [V,Q]&=-BU-QV-2\hbar B, &
 [V,R]&=-\ii BU+\ii QV-2\ii\hbar B, \nonumber \\
 [Z_L,B]&=-\ii BV+\ii QU+2\ii\hbar Q+2\hbar R, &
 [Z_L,Q]&=-\ii BU+\ii QV-2\ii\hbar B,\nonumber  \\
 [Z_L,R]&=BU+QV+2\hbar B.\label{eq:workingmixedledger}
\end{align}
Together with the internal quadratic and exchange relations, these give the commutators in the working basis.

The generators and their representation roles are summarized in the
table at the end of Section~\ref{sec:domains}.

\section{Full angular matrices and product checks}\label{app:physical-matrices}
We compute the angular matrices in the common phase convention of
Eqs.~\eqref{eq:coupled-singlet}--\eqref{eq:coupled-triplet}, using a
vector-harmonic basis.

We use the four-channel order $s,t_-,t_0,t_+$ and write $q_j=\hbar^2j(j+1)$. We set
\begin{equation}
 b_-=2\sqrt{\frac{j+1}{2j+1}},\qquad
 b_+=2\sqrt{\frac{j}{2j+1}},\qquad v_j=2\hbar\sqrt{j(j+1)}.
\end{equation}
For the coupled spinors in Eqs.~\eqref{eq:coupled-singlet}--\eqref{eq:coupled-triplet}, let $\chi_s=\chi_{00}$ and $\chi_i=\Delta_i\chi_s/2$. This is an orthonormal Cartesian triplet basis, with $\Sigma_i\chi_j=2\ii\epsilon_{ijk}\chi_k$. We identify $\sum_i v_i\chi_i$ with $\bm v$.

For $j\geq1$, we set $g=j(j+1)$, $c_j=\sqrt{j/(2j+1)}$ and $d_j=\sqrt{(j+1)/(2j+1)}$. With $\nabla_S$ denoting the surface gradient, we define
\[
 \bm e_r=\bm nY_{jm},\qquad
 \bm e_g=\frac{\nabla_S Y_{jm}}{\sqrt g},\qquad
 \bm e_c=\frac{\ii\bm n\times\nabla_S Y_{jm}}{\sqrt g}.
\]
These vector harmonics are orthonormal by $-\Delta_SY_{jm}=gY_{jm}$ and integration by parts on the sphere. In the stated phase convention the triplet channels are
\[
 t_-=c_j\bm e_r+d_j\bm e_g,\qquad
 t_0=\bm e_c,\qquad
 t_+=-d_j\bm e_r+c_j\bm e_g.
\]
A componentwise application of $\Lop^2$ gives orbital angular momenta $j-1,j,j+1$, respectively. The physical actions are
\[
 B\bm v=2\ii\bm n\times\bm v,\qquad
 Q(Y_{jm}\chi_s)=-2\bm e_r,\qquad
 V(Y_{jm}\chi_s)=2\hbar\sqrt g\,\bm e_c.
\]
Since $B$ annihilates the singlet and $Q,V$ are spin-exchange odd, these actions and Hermiticity determine the displayed matrices. At $j=0$, only $Y_{00}\chi_s$ and $t_+=-\bm nY_{00}$ remain, giving $Q_{s,t_+}=2$ and $B=V=0$.

For $j\geq1$, this gives the matrices
\begin{align*}
 U&=\operatorname{diag}(0,2\hbar(j-1),-2\hbar,-2\hbar(j+2)),\quad 
 P=\operatorname{diag}(-1,1,1,1),\\
 B&=\begin{pmatrix}0&0&0&0\\0&0&b_-&0\\0&b_-&0&b_+\\0&0&b_+&0\end{pmatrix},\quad
 Q=\begin{pmatrix}0&-b_+&0&b_-\\-b_+&0&0&0\\0&0&0&0\\b_-&0&0&0\end{pmatrix},\quad
 V=\begin{pmatrix}0&0&v_j&0\\0&0&0&0\\v_j&0&0&0\\0&0&0&0\end{pmatrix}.
\end{align*}
The remaining matrices are $R=-\ii P Q$, $Z_L=\ii P V$ and $C=UB-BU$. A rephasing of any channel conjugates all matrices simultaneously; changing a sign in only one of them would not in general preserve the mixed identities.

These matrices satisfy
\begin{align*}
 &BQ=0,\quad B^2+Q^2=4\id, \quad R^2=Q^2, \quad
 [U,V]=2\ii\hbar Z_L,\\ & [U,Z_L]=-2\ii\hbar V,\quad
 [U,B]=-BU-QV-2\hbar B.
\end{align*}
They display both the two exchange directions and their interaction with the triplet quadratic algebra. Equation~\eqref{eq:Vsquare} follows by using $\Lop^2=q_j\id-\hbar U-\hbar^2(6+2K)/4$ in these matrices. The Casimir calculation is similarly a direct multiplication and gives Eq.~\eqref{eq:CasQvalue} separately on the singlet and triplet blocks.

The proof of Proposition~\ref{prop:M4} gives an explicit construction of the matrix units. If $P_a$ is a channel projector and $G$ is a generator with nonzero entry $g_{ab}$, then
\begin{equation}
 E_{ab}=\frac{P_aGP_b}{g_{ab}}.
\end{equation}
The connected edge set supplied by $B,V$ gives every matrix unit by products along a path. These normalized matrix units are defined at fixed $j$.

\section{Radial product identities}
The Leibniz rule $\partial_r f=f\partial_r+f'$ gives
\begin{equation}
 \mathsf K_+\mathsf K_-=\mathsf K_0(\mathsf K_0-1)-\mathsf C,
 \qquad
 \mathsf K_-\mathsf K_+=\mathsf K_0(\mathsf K_0+1)-\mathsf C.
\end{equation}
At the eigenvalue $n+k_j$, these products are respectively
\begin{equation}
 n(n+2k_j-1),\qquad(n+1)(n+2k_j).
\end{equation}
They give the squared ladder amplitudes in Eqs.~\eqref{eq:radialactions}--\eqref{eq:radialactionminus}; the phase in Eq.~\eqref{eq:unormalized} gives the displayed positive amplitudes.

A first-order factorization of Infeld--Hull type \cite{infeld}, familiar also from higher-dimensional radial oscillator treatments \cite{arik}, provides an independent check of the lowest state. With
\begin{equation}
 a_j=\frac1{\sqrt2}\left(\hbar\frac{d}{dr}+\omega r-\frac{\hbar(j+1)}r\right),
\end{equation}
one finds
\begin{equation}
 a_j^\dagger a_j=H_j-\hbar\omega(j+\tfrac32),\qquad
 a_ja_j^\dagger=H_{j+1}-\hbar\omega(j+\tfrac12).
\end{equation}
The equation $a_ju_{0j}=0$ gives Eq.~\eqref{eq:u0}. This first-order factorization changes the centrifugal parameter, whereas the second-order generators $\mathsf K_\pm$ act at fixed $j$.

\section*{Acknowledgments}
IM was supported by the
Australian Research Council Future Fellowship FT180100099. The work of OOT and \.{I}Y was supported by the Scientific and Technological Research Council of Türkiye (TÜBİTAK) under the 1001 Program, Project No. 123F161.

\section*{Data availability}
No new data were created or analysed in this study.


\begin{thebibliography}{99}
\bibitem{jauch} J. M. Jauch and E. L. Hill, ``On the problem of degeneracy in quantum mechanics,'' Phys. Rev. \textbf{57}, 641--645 (1940). \href{https://doi.org/10.1103/PhysRev.57.641}{doi:10.1103/PhysRev.57.641}.
\bibitem{fradkin} D. M. Fradkin, ``Three-dimensional isotropic harmonic oscillator and SU(3),'' Am. J. Phys. \textbf{33}, 207--211 (1965). \href{https://doi.org/10.1119/1.1971373}{doi:10.1119/1.1971373}.
\bibitem{review} W. Miller Jr., S. Post and P. Winternitz, ``Classical and quantum superintegrability with applications,'' J. Phys. A: Math. Theor. \textbf{46}, 423001 (2013). \href{https://doi.org/10.1088/1751-8113/46/42/423001}{doi:10.1088/1751-8113/46/42/423001}. \href{https://arxiv.org/abs/1309.2694}{arXiv:1309.2694}.
\bibitem{dask91} C. Daskaloyannis, ``Generalized deformed oscillator and nonlinear algebras,'' J. Phys. A: Math. Gen. \textbf{24}, L789--L794 (1991). \href{https://doi.org/10.1088/0305-4470/24/15/001}{doi:10.1088/0305-4470/24/15/001}.
\bibitem{bonatsos} D. Bonatsos, C. Daskaloyannis and K. Kokkotas, ``Deformed oscillator algebras for two-dimensional quantum superintegrable systems,'' Phys. Rev. A \textbf{50}, 3700 (1994). \href{https://doi.org/10.1103/PhysRevA.50.3700}{doi:10.1103/PhysRevA.50.3700}. \href{https://arxiv.org/abs/hep-th/9309088}{arXiv:hep-th/9309088}.
\bibitem{dask} C. Daskaloyannis, ``Quadratic Poisson algebras of two-dimensional classical superintegrable systems and quadratic associative algebras of quantum superintegrable systems,'' J. Math. Phys. \textbf{42}, 1100--1119 (2001). \href{https://doi.org/10.1063/1.1348026}{doi:10.1063/1.1348026}. \href{https://arxiv.org/abs/math-ph/0003017}{arXiv:math-ph/0003017}.
\bibitem{isaac} P. S. Isaac and I. Marquette, ``On realizations of polynomial algebras with three generators via deformed oscillator algebras,'' J. Phys. A: Math. Theor. \textbf{47}, 205203 (2014). \href{https://doi.org/10.1088/1751-8113/47/20/205203}{doi:10.1088/1751-8113/47/20/205203}. \href{https://arxiv.org/abs/1402.3361}{arXiv:1402.3361}.
\bibitem{tanoudis} Y. Tanoudis and C. Daskaloyannis, ``Algebraic calculation of the energy eigenvalues for the nondegenerate three-dimensional Kepler--Coulomb potential,'' SIGMA \textbf{7}, 054 (2011). \href{https://doi.org/10.3842/SIGMA.2011.054}{doi:10.3842/SIGMA.2011.054}. \href{https://arxiv.org/abs/1102.0397}{arXiv:1102.0397}.
\bibitem{chain} Y. Liao, I. Marquette and Y.-Z. Zhang, ``Quantum superintegrable system with a novel chain structure of quadratic algebras,'' J. Phys. A: Math. Theor. \textbf{51}, 255201 (2018). \href{https://doi.org/10.1088/1751-8121/aac111}{doi:10.1088/1751-8121/aac111}. \href{https://arxiv.org/abs/1801.05634}{arXiv:1801.05634}.
\bibitem{dhokervinet84} E. D'Hoker and L. Vinet, ``Supersymmetry of the Pauli equation in the presence of a magnetic monopole,'' Phys. Lett. B \textbf{137}, 72--76 (1984). \href{https://doi.org/10.1016/0370-2693(84)91108-0}{doi:10.1016/0370-2693(84)91108-0}.
\bibitem{dhokervinet85cmp} E. D'Hoker and L. Vinet, ``Dynamical supersymmetry of the magnetic monopole and the $1/r^2$ potential,'' Commun. Math. Phys. \textbf{97}, 391--427 (1985). \href{https://doi.org/10.1007/BF01213405}{doi:10.1007/BF01213405}.
\bibitem{dhokervinet85prl} E. D'Hoker and L. Vinet, ``Constants of motion for a spin-$\tfrac12$ particle in the field of a dyon,'' Phys. Rev. Lett. \textbf{55}, 1043--1046 (1985). \href{https://doi.org/10.1103/PhysRevLett.55.1043}{doi:10.1103/PhysRevLett.55.1043}.
\bibitem{dhokervinet85npb} E. D'Hoker and L. Vinet, ``Spectrum (super-) symmetries of particles in a Coulomb potential,'' Nucl. Phys. B \textbf{260}, 79--102 (1985). \href{https://doi.org/10.1016/0550-3213(85)90311-6}{doi:10.1016/0550-3213(85)90311-6}.
\bibitem{dhokervinet86} E. D'Hoker and L. Vinet, ``Hidden symmetries and accidental degeneracy for a spin-$\tfrac12$ particle in the field of a dyon,'' Lett. Math. Phys. \textbf{12}, 71--76 (1986). \href{https://doi.org/10.1007/BF00400305}{doi:10.1007/BF00400305}.
\bibitem{spin2009} P. Winternitz and \.{I}. Yurdu\c{s}en, ``Integrable and superintegrable systems with spin in three-dimensional Euclidean space,'' J. Phys. A: Math. Theor. \textbf{42}, 385203 (2009). \href{https://doi.org/10.1088/1751-8113/42/38/385203}{doi:10.1088/1751-8113/42/38/385203}. \href{https://arxiv.org/abs/0906.1021}{arXiv:0906.1021}.
\bibitem{spin2012} J.-F. D\'esilets, P. Winternitz and \.{I}. Yurdu\c{s}en, ``Superintegrable systems with spin and second-order integrals of motion,'' J. Phys. A: Math. Theor. \textbf{45}, 475201 (2012). \href{https://doi.org/10.1088/1751-8113/45/47/475201}{doi:10.1088/1751-8113/45/47/475201}. \href{https://arxiv.org/abs/1208.2886}{arXiv:1208.2886}.
\bibitem{scalar} O. O. Tuncer and \.{I}. Yurdu\c{s}en, ``Superintegrability in the interaction of two particles with spin,'' J. Phys. A: Math. Theor. \textbf{58}, 245201 (2025). \href{https://doi.org/10.1088/1751-8121/adde03}{doi:10.1088/1751-8121/adde03}. \href{https://arxiv.org/abs/2410.22321}{arXiv:2410.22321}.
\bibitem{pseudo} F. T\"urkkan, O. O. Tuncer and \.{I}. Yurdu\c{s}en, ``Superintegrability in the interaction of two particles with spin: First-order pseudo-scalar integrals of motion,'' \href{https://arxiv.org/abs/2605.07659v1}{arXiv:2605.07659v1} (2026).
\bibitem{vector} O. O. Tuncer and \.{I}. Yurdu\c{s}en, ``Superintegrable systems of two interacting spin-$\frac12$ particles with first-order vector integrals of motion,'' \href{https://arxiv.org/abs/2607.19155v1}{arXiv:2607.19155v1} (2026).
\bibitem{infeld} L. Infeld and T. E. Hull, ``The factorization method,'' Rev. Mod. Phys. \textbf{23}, 21--68 (1951). \href{https://doi.org/10.1103/RevModPhys.23.21}{doi:10.1103/RevModPhys.23.21}.
\bibitem{arik} M. Ar{\i}k, M. Baykal and A. Baykal, ``Factorization method for $d$-dimensional isotropic harmonic oscillator and the generalized Laguerre polynomials,'' \href{https://arxiv.org/abs/0808.2289v2}{arXiv:0808.2289v2} (2010).
\bibitem{radialsu11} O. Rosas-Ortiz, S. Cruz y Cruz and M. Enriquez, ``SU(1,1) and SU(2) approaches to the radial oscillator: Generalized coherent states and squeezing of variances,'' Ann. Phys. \textbf{373}, 346--373 (2016). \href{https://doi.org/10.1016/j.aop.2016.07.001}{doi:10.1016/j.aop.2016.07.001}. \href{https://arxiv.org/abs/1603.03499}{arXiv:1603.03499}.
\end{thebibliography}
\end{document}